\documentclass[english,11pt]{article}

\addtolength{\oddsidemargin}{-0.5in}
\addtolength{\evensidemargin}{-0.5in}
\addtolength{\textwidth}{1in}

\usepackage[german, english]{babel}
\usepackage[pdftex]{color, graphicx}
\usepackage{dsfont}
\usepackage{amsmath}
\usepackage{amssymb}
\usepackage{bm}
\usepackage{url}
\usepackage{dsfont}
\usepackage[utf8]{inputenc}
\usepackage{enumerate}
\usepackage{ifthen}
\usepackage{paralist}
\usepackage{fancyhdr}
\usepackage{subfigure}
\usepackage{pdfpages}
\usepackage[babel, english=american]{csquotes}
\usepackage{makeidx}
\DeclareGraphicsRule{.pdftex}{pdf}{*}{}
\usepackage{geometry}
\usepackage{nccmath}
\geometry{a4paper, portrait}
\usepackage{booktabs, multirow}
\definecolor{myblue}{rgb}{0,0,0.5}
\usepackage[a4paper, plainpages=false, colorlinks, 
    urlcolor=black,
    filecolor=black,
    citecolor=black,
    linkcolor=black,
    bookmarks,
    bookmarksnumbered,
    pdfpagemode=UseOutlines,
    pdfcreator = {LaTeX with hyperref package},
    pdfproducer = {pdfLaTeX}]{hyperref}

\newtheorem{theorem}{Theorem}[section]
\newtheorem{corollary}[theorem]{Corollary}
\newtheorem{proposition}[theorem]{Proposition}
\newtheorem{lemma}[theorem]{Lemma}
\newtheorem{remark}[theorem]{Remark}
\newtheorem{observation}[theorem]{Observation}
\newtheorem{definition}[theorem]{Definition}
\newtheorem{claim}[theorem]{Claim}
\newtheorem{problem}[theorem]{Problem}
\newtheorem{question}[theorem]{Question}
\newtheorem{notation}[theorem]{Notation}
\newtheorem{beispiel}[theorem]{Example}

\newtheorem{conjecture}[theorem]{Conjecture}

\newenvironment{theo}[1][\empty]{\begin{theorem} 
\ifthenelse{\equal{#1}{\empty}} {}  {\normalfont(\itshape #1\normalfont)} \normalfont ~\\}{\end{theorem}}
\newenvironment{cor}[1][\empty]{\begin{corollary} 
\ifthenelse{\equal{#1}{\empty}} {}  {\normalfont(\itshape #1\normalfont)}  \normalfont~\\}{\end{corollary}}

\newenvironment{lem}[1][\empty]{\begin{lemma} 
\ifthenelse{\equal{#1}{\empty}} {}  {\normalfont(\itshape #1\normalfont)} \normalfont ~\\}{\end{lemma}}
\newenvironment{rem}[1][\empty]{\begin{remark} 
\ifthenelse{\equal{#1}{\empty}} {}  {\normalfont(\itshape #1\normalfont)} \normalfont ~\\}{\end{remark}}

\newenvironment{defi}[1][\empty]{\begin{definition} 
\ifthenelse{\equal{#1}{\empty}} {}  {\normalfont(\itshape #1\normalfont)} \normalfont ~\\}{\end{definition}}

\newenvironment{prb}[1][\empty]{\begin{problem} 
\ifthenelse{\equal{#1}{\empty}} {}  {\normalfont(\itshape #1\normalfont)} \normalfont ~\\}{\end{problem}}

\newenvironment{nota}[1][\empty]{\begin{notation} 
\ifthenelse{\equal{#1}{\empty}} {} {\normalfont(\itshape #1\normalfont)} \normalfont ~\\}{\end{notation}}

\newenvironment{proof}[1][\empty]{\ifthenelse{\equal{#1}{\empty}} {\paragraph{Proof.}~\\}  
{\paragraph{Proof of #1 }~\\}} {$\hfill \Box$}

\newcommand {\N}[0] {\mathbb{N}}
\newcommand {\R}[0] {\mathbb{R}}
\newcommand {\Q}[0] {\mathbb{Q}}
\newcommand {\Z}[0] {\mathbb{Z}}
\renewcommand {\S}[0] {\mathbb{S}}
\newcommand {\B}[0] {\mathbb{B}}
\newcommand {\NP}[0] {\mathbb{NP}}
\renewcommand{\P}[0] {\mathbb{P}}

\newcommand{\conv}[0] {\mathrm{conv}}
\newcommand{\aff}[0] {\mathrm{aff}}
\newcommand{\lin}[0] {\mathrm{lin}}
\newcommand{\pos}[0] {\mathrm{pos}}
\newcommand{\ext}[0] {\mathrm{ext}}
\newcommand{\epi}[0] {\mathrm{epi}}

\newcommand{\bd}[0]{\mathrm{bd}}
\newcommand{\cl}[0]{\mathrm{cl}}
\renewcommand{\int}[0]{\mathrm{int}}
\newcommand{\relint}[0]{\mathrm{relint}}
\newcommand{\CA}[0]{\mathcal{A}}
\newcommand{\CC}[0]{\mathcal{C}}

\newcommand{\CH}[0]{\mathcal{H}}

\newcommand{\CV}[0]{\mathcal{V}}

\newcommand{\normmax}[1][\empty]{\ifthenelse{\equal{#1}{\empty}} {\textsc{Normmax}}  
{\textsc{Normmax}$_{#1}$}}
\renewcommand{\vector}[1]{\left(\begin{array}{c} #1 \end{array}\right)}

\setlength\parindent{0in}

\begin{document}
\selectlanguage{english}
\title{Computational Aspects of the Hausdorff Distance\\ in Unbounded Dimension}
\author{Stefan König\footnote{Institut für Mathematik, Technische Universität Hamburg-Harburg, \texttt{stefan.koenig@tuhh.de}} }
\date{\today}

\maketitle

\begin{center}
\textbf{Abstract.}
\end{center}
\vspace{-0.9cm}
\paragraph{} We study the computational complexity of determining the Hausdorff distance of two polytopes given in halfspace- or vertex-presentation in arbitrary dimension. Subsequently, a matching problem is investigated where a convex body is allowed to be homothetically transformed in order to minimize its Hausdorff distance to another one. For this problem, we characterize optimal solutions, deduce a Helly-type theorem and give polynomial time (approximation) algorithms for polytopes.

\section{Introduction}
\label{sec:introHausdorff}

\paragraph{} The problem of comparing two geometric objects and evaluating their similarity or dissimilarity arises naturally in many applications such as shape fitting, pattern recognition or computer vision. A suitable means that has widely been applied to evaluate the resemblance of two compact sets is the \emph{Hausdorff distance}, see e.g.~\cite{hausdorffImages, hausdorffFaceRecognition, hausdorffComputerVision} or the surveys \cite{altGuibasShapeFitting, shapeFittingSurvey} and the references therein. Consequently, the question of how to compute the Hausdorff distance of geometric objects has already been studied extensively as e.g.~in \cite{abb-amps-95, alt-hausdorff, altGuibasShapeFitting,  atallah-hausdorff}. In addition, often not only the static evaluation problem but also problems where one object is allowed to undergo a transformation from a certain class in order to match the other one are of particular interest, cf. e.g. \cite{hausdorffMatchingTranslation, abb-amps-95, amentaHausdorff, atkinsonPointPatternMatching, hausdorffEuclidean}.

Starting from finite point sets in the plane, there are two main lines of research in the above-mentioned papers: they either study a broader class of geometric objects in the plane as in \cite{parametricSearch, rigidMotions} or different higher dimensional matching problems for finite point sets as in \cite{patterMatchingDspace, exactPointPatternMatching}. For a detailed overview of known results, we refer to the tables in  \cite[Chapter 3]{wenkDiss}.

In view of results as \cite{freundOrlin, gk-93, MegiddoKcenterHard} about computational problems in unbounded dimension, it is a natural and fundamental question to ask whether the Hausdorff distance of two polytopes can be computed efficiently in unbounded dimension. The present paper gives a detailed answer to this question. 

\paragraph{} Of course, the computational complexity of computing the Hausdorff distance of two polytopes depends on their presentation. For the different combinations of presentations of the input polytopes, we classify the problems as being solvable in polynomial time or being hard. In addition to the classic $\NP$-hardness, we also establish W[1]-hardness in the theory of Fixed Parameter Tractability \cite{df-99, fg-06, n-06}. This allows a refined analysis of the influence of the dimension in the $\NP$-hard cases and, loosely speaking, shows that the respective problems have to be considered intractable already in low dimensions. 

We show in Theorem~\ref{theo:HausdorffVVeasy} that if both polytopes are available in vertex presentation, their Hausdorff distance can be computed in polynomial time in arbitrary dimension for the most common norms.  If, on the other hand, at least one polytope is given in halfspace presentation, Theorem~\ref{theo:HausdorffHHard} shows  that evaluating the Hausdorff distance is in general $\NP$- and W[1]-hard. 

\paragraph{} Similarly to the references mentioned above, we also consider a matching problem, which seeks for a homothetic transformation of a convex body that minimizes the Hausdorff distance to another body. Within a general analysis of this problem for convex bodies, we give an optimality criterion in the spirit of John's Theorem \cite{john} in Theorem~\ref{theo:optCondHausdorffHomo}. From this, we deduce a Helly-type result for convex bodies in Theorem~\ref{theo:HMHomoHelly}. Moreover, we demonstrate in Theorem~\ref{theo:matchingSOCP} that for two vertex presented polytopes also the matching problem can be solved efficiently in arbitrary dimension. Interestingly, even if one halfspace presented polytope is involved and the evaluation problem is W[1]-hard, a factor-$(3\sqrt{d}+1)$-approximate solution of the matching problem in $\R^d$ can be computed in polynomial time, cf. Theorem~\ref{theo:referencePoints}.

\paragraph{} The paper is organized as follows. We start by introducing our notation in Section~\ref{sec:preliminaries} and state basic properties of the Hausdorff distance that will be useful at different places throughout the paper. In Section~\ref{sec:hausdorffEval}, we give a precise problem formulation and study the evaluation problem i.e.~we analyze the complexity of lower bounding the Hausdorff distance of two given polytopes. In Section~\ref{sec:hausdorffMatching} we present the results for the matching problem.

\section{Preliminaries}
\label{sec:preliminaries}
\subsection{Notation}

\paragraph{} Throughout this paper, we are working in $d$-dimensional real space $\R^d$ equipped with an arbitrary fixed norm $\|\cdot\|$ if not otherwise specified. For a set $A \subseteq \R^d$ we write $\lin(A)$, $\aff(A)$, $\conv(A)$, $\int(A)$, $\relint(A)$, and $\bd(A)$ for the linear, affine, or convex hull and the interior, relative interior and the boundary of $A$, respectively.

For two sets $A, B \subset \R^d$ and $\rho \in \R$, let $\rho A := \{\rho a: a \in A\}$ and $A+B:= \{a+b: a\in A, b\in B\}$ the $\rho$-\emph{dilatation}\index{dilatation} of $A$ and the \emph{Minkowski sum}\index{$A+B$ (for sets)}\index{Minkowski sum} of $A$ and $B$, respectively. 
We abbreviate  $A + (-B)$ by $A -B$ and  $A+\{c\}$ by $A+c$.

A non-empty set $K \subset \R^d$ which is convex and compact is called a \emph{convex body}\index{convex body} or \emph{body}\index{body} for short.
We write $\CC^d$\index{$\CC^d$} for the family of all convex bodies in $\R^d$ and $\CC^d_0$  if restricted to contain the origin as an interior point.

If a polytope $P \subseteq \R^d$ is described as a bounded intersection of halfspaces, we say that $P$ is in $\CH$-presentation\index{H-presentation@$\CH$-presentation}. If $P$ is given as the convex hull of finitely many points, we call this a $\CV$-presentation\index{V-presentation@$\CV$-presentation} of $P$.

For a convex set $K \subseteq \R^d$, $\ext(K)$ denotes  the set of  \emph{extreme points} of $K$.

We write $\B:=\{x \in \R^d: \|x\| \leq 1\}$ for the unit ball of $\|\cdot\|$ and  $\B_p:= \{x \in \R^d: \|x\|_p \leq 1\}$ for the unit ball of the $p$-norm $\|\cdot\|_p$. 

A set  $K \subseteq \R^d$ is called \emph{0-symmetric}\index{0-symmetric} if $-K= K$. If there is a $c \in \R^d$ such that $-(c+K)= c+K$ we call $K$ \emph{symmetric}\index{symmetric}.

For two vectors $x,y\in \R^d$, we use the notation $x^Ty:= \sum_{i=1}^d x_i y_i$\index{$x^Ty$}\index{scalar product}\index{dot product}\index{inner product} for the standard scalar product of $x$ and $y$, and by $H_{\leq}(a,\beta):= \{x \in \R^d: a^T x \leq \beta\}$ we denote the half-space\index{halfspace}\index{$H_\leq$} induced by $a\in \R^d$ and $\beta \in \R$, bounded by the hyperplane\index{hyperplane}\index{$H_=(a,\beta)$} $H_{=}(a,\beta):= \{x \in \R^d: a^T x = \beta\}$. 
For a vector $a \in \R^d$ and a convex set $K \subseteq \R^d$, we write
$ h(K,a):= \sup \{a^T x: x \in K\}$ for the \emph{support function} of $K$ in direction $a$.

For a convex function $f: C \rightarrow \R$ on some convex set $C\subseteq \R^d$ and $x \in C$, we write $\partial f(x)$ for the \emph{subdifferential}\index{subdifferential}\index{$\partial f(x)$} of $f$ in $x$ and, if $f$ is continuously differentiable in $x$, we denote by $\nabla f(x)$ the \emph{gradient}\index{$\nabla f(x)$} of $f$ in $x$.
We denote by $T(P,x):= \cl \{v \in \R^d: \exists \lambda > 0 ~s.t.~ x+\lambda v \in P\}$ the tangential cone  of $P$ at $x$ and by $N(P,x):= \{a \in \R^d: h(P,a)= a^T x\}= T(P,x)^\circ $
the normal cone\index{normal cone}\index{$N(P,x)$} of $P$ at $x$. 
For a convex body $C \in \CC^d$, we write $C^\circ:=\{a \in \R^d: a^T x \leq 1 ~\forall x \in C \}$ for its polar.

For $n\in \N$, we abbreviate $[n]:= \{1,\dots, n\}$\index{$[n]$}. 

\paragraph{} We denote by $\P$\index{$\P$} (and $\NP$\index{$\NP$}, respectively) the classes of decision problems that are solvable (verifiable, respectively) in polynomial time. For an account on complexity theory, we refer to \cite{GareyJohnson}. We write FPT\index{FPT}\index{$\text{FPT}$} for the class of fixed-parameter-tractable problems and W[1]\index{$W[1]$}\index{W[1]} for the problems of the first level of the W-hierarchy in the theory of Fixed Parameter Tractability. For an introduction to Fixed Parameter Tractability, we refer to the textbooks \cite{fg-06, n-06}.

\subsection{The Hausdorff Distance}

\paragraph{} We start by defining the functional of interest in this paper, which is based on the distance function of convex bodies.
\begin{defi}[Distance mapping]
For $P\subseteq \R^d$ non-empty and compact, define\index{$d(x, P)$}\index{nearest point mapping}
\begin{equation} 
\begin{array}{llll}
d(\cdot, P): 	&\R^d 	& \rightarrow 	& [0,\infty) \\
~ 				& x 	& \mapsto 		& d(x, P)= \min \left \{\|x-p\|: p \in P \right\}
\end{array}
\label{eq:distanceMapping}
\end{equation}
the distance mapping of $P$ induced by $\|\cdot\|$. As $P\neq \emptyset$ is compact and $\|\cdot\|$ is continuous, the notation in \eqref{eq:distanceMapping} as a \emph{minimum} is justified.
\end{defi}

\begin{rem}[Convexity of the distance function] \label{rem:distanceIsConvex}%
Since it is the key for the tractability result in Theorem~\ref{theo:HausdorffVVeasy}, we explicitly remark that the convexity of a set $P \subseteq \R^d$ directly implies the convexity of its distance function $d(\cdot, P)$.
\end{rem}

\begin{defi}[Hausdorff distance]
Let  $P,Q\subseteq \R^d$ non-empty and compact. The Hausdorff distance induced by $\|\cdot\|$ between $P$ and $Q$ is defined as
\begin{equation}
\delta(P,Q):= \max\{\max_{p \in P} d(p, Q), \max_{q\in Q} d(q, P)\}.
\label{eq:hausdorffDefi}
\end{equation}
Since $P, Q$ are non-empty and compact, and $d(\cdot, P)$, $d(\cdot, Q)$ are continuous, the maximum in \eqref{eq:hausdorffDefi} is attained. It can also be expressed by 
\begin{equation}
\delta(P, Q)= \min\{\rho \geq 0: P \subseteq Q + \rho \B, Q \subseteq P + \rho \B\}.
\label{eq:hausdorffDefi2}
\end{equation}
Figure~\ref{fig:defi} illustrates both formulations.

When working in $(\R^d, \|\cdot\|_p)$ for some $p \in \N\cup \{\infty\}$, we write 
$$\delta_p(P,Q) =\max\{\max_{x \in P} d_p(x, Q), \max_{q\in Q} d_p(q, P)\},$$
where the subscript $p$ explicitly indicates the norm with respect to which the Hausdorff distance is measured. 

\label{defi:hausdorff}
\end{defi}
\begin{figure}[]
\centering
\includegraphics[width=0.45\textwidth]{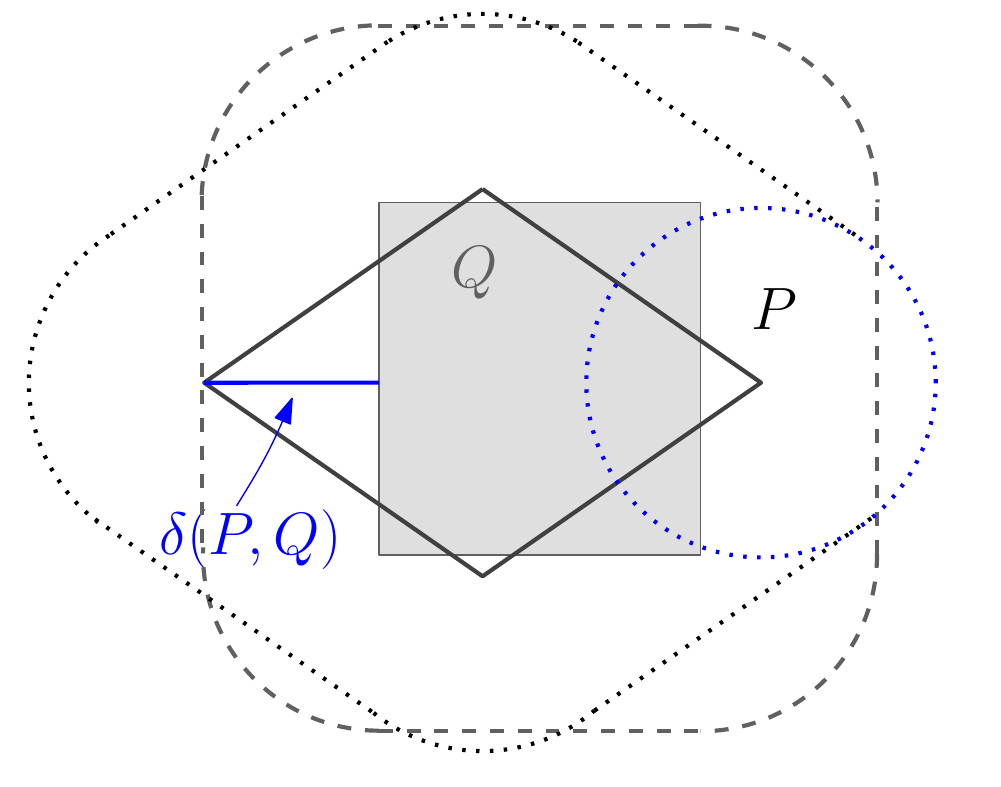}
\caption{Definition of the Hausdorff distance via the two equivalent formulations in Equations \eqref{eq:hausdorffDefi} and \eqref{eq:hausdorffDefi2} of Definition~\ref{defi:hausdorff}. \label{fig:defi}}
\end{figure}

\paragraph{} Besides the two equivalent formulations in \eqref{eq:hausdorffDefi} and \eqref{eq:hausdorffDefi2}, there is also a third formulation based on the support functions of $P$ and $Q$. 

\begin{lem}[Hausdorff distance via support functions] \label{lem:hausdorff3}%
Let $P,Q\in \CC^d$. Then, 
\begin{equation}
 \delta(P,Q)= \max_{u\in \B^\circ} |h(P, u) - h(Q, u)|.
 \label{eq:hausdorff3}
 \end{equation}
\end{lem}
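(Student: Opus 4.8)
The natural approach is to use formulation \eqref{eq:hausdorffDefi2}, which characterizes $\delta(P,Q)$ as the smallest $\rho \geq 0$ with $P \subseteq Q + \rho\B$ and $Q \subseteq P + \rho\B$, and to translate each of these two containments into a statement about support functions. First I would recall the standard fact that for convex bodies $A$, $C$, one has $A \subseteq C$ if and only if $h(A,u) \leq h(C,u)$ for all $u \in \R^d$ (the ``only if'' is immediate from the definition of the support function, and the ``if'' direction follows because a convex body is the intersection of the halfspaces $H_\leq(u, h(C,u))$ over all $u$, i.e.\ from the separating hyperplane theorem). Combined with the additivity $h(Q + \rho\B, u) = h(Q,u) + \rho\, h(\B,u)$ of support functions under Minkowski sums and nonnegative dilation, the containment $P \subseteq Q + \rho\B$ becomes $h(P,u) - h(Q,u) \leq \rho\, h(\B, u)$ for all $u$, and symmetrically $Q \subseteq P + \rho\B$ becomes $h(Q,u) - h(P,u) \leq \rho\, h(\B,u)$ for all $u$.

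Next I would normalize. The quantity $h(\B,u) = \sup\{u^T x : \|x\| \leq 1\}$ is precisely the dual norm of $u$, and $h(\B,u) = 1$ exactly when $u \in \bd(\B^\circ)$, while the inequalities above are trivially satisfied when $u = 0$ and scale homogeneously in $u$; so it suffices to impose them for $u \in \B^\circ$ with $h(\B,u) = 1$, equivalently (since the left-hand sides are also positively homogeneous and we may as well relax to the full polar, where $h(\B,u)\le 1$) for all $u \in \B^\circ$. Thus the pair of containments holds for a given $\rho$ if and only if $|h(P,u) - h(Q,u)| \leq \rho$ for every $u \in \B^\circ$, i.e.\ if and only if $\rho \geq \max_{u \in \B^\circ} |h(P,u) - h(Q,u)|$. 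Taking the minimal such $\rho$ yields \eqref{eq:hausdorff3}. One should note the maximum on the right is attained because $\B^\circ$ is compact (as $\B$ has the origin in its interior) and $u \mapsto |h(P,u) - h(Q,u)|$ is continuous, $h(\cdot,\cdot)$ being convex hence continuous in its direction argument.

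The one genuinely delicate point — and the part I would write most carefully — is the reduction from ``all $u$'' to ``$u \in \B^\circ$'': one has to make sure that relaxing $h(\B,u)=1$ to $h(\B,u)\le 1$ does not weaken the condition. This is fine because for $u$ with $0 < h(\B,u) < 1$ the inequality $|h(P,u)-h(Q,u)| \le \rho$ is implied by the corresponding inequality at $u/h(\B,u) \in \bd(\B^\circ)$ together with positive homogeneity of $h(P,\cdot)$ and $h(Q,\cdot)$, and for $u$ with $h(\B,u) = 0$ one checks directly that $h(P,u) = h(Q,u)$: indeed $h(\B,u)=0$ with $\B$ containing the origin in its interior forces $u=0$. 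So no $u\in\B^\circ$ is ``lost,'' and the two formulations agree. The remaining steps (Minkowski additivity of support functions, the support-function criterion for inclusion) are standard and I would simply cite or state them with a one-line justification.
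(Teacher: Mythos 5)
Your proof is correct, but it takes a genuinely different route from the paper for the nontrivial inequality $\delta(P,Q)\le\max_{u\in\B^\circ}|h(P,u)-h(Q,u)|$. You run the full equivalence $A\subseteq C\Leftrightarrow h(A,u)\le h(C,u)$ for all $u$ (i.e.\ separation / the supporting-halfspace description of a convex body) through both containments in \eqref{eq:hausdorffDefi2}, combine it with $h(Q+\rho\B,u)=h(Q,u)+\rho\, h(\B,u)$, and then do the homogeneity/normalization step that you rightly single out as the delicate point; this yields both inequalities at once from a single characterization of the admissible $\rho$. The paper proves the easy inequality exactly as you do (containment plus $h(\B,u)\le 1$ for $u\in\B^\circ$), but for the reverse inequality it argues constructively: it picks $p^*\in P$, $q^*\in Q$ attaining $\rho=\delta(P,Q)$ and extracts from the subdifferential of $q\mapsto\|q-p^*\|$ at the minimizer $q^*$ a functional $u^*\in\B^\circ$ with $-u^*\in N(Q,q^*)$ and $h(P,u^*)-h(Q,u^*)\ge\rho$. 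Your argument is the standard duality proof, shorter and needing no subdifferential machinery; the paper's argument buys an explicit maximizing direction (a dual unit vector in the normal cone at the nearest point), i.e.\ information about where the maximum in \eqref{eq:hausdorff3} is attained, in the spirit of how the lemma is exploited later. Your handling of the normalization is sound, including the observation that $h(\B,u)=0$ forces $u=0$ because $0\in\int(\B)$, and the attainment of the maximum via compactness of $\B^\circ$ and continuity of the support functions.
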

\begin{proof}
Let $\rho:= \delta(P,Q)$ and $u \in \B^\circ$. As $P \subset Q + \rho \B$, it follows that $h(P,u) \leq h(Q+\rho \B, u) \leq h(Q, u) + \rho$. In the same way, $h(Q, u) \leq h(P, u) + \rho$. As $u \in \B^\circ$ was arbitrary, this yields $\max_{u\in \B^\circ} |h(P, u) - h(Q, u)| \leq \rho$.

For the other inequality, let $p^* \in P$ and $q^* \in Q$ such that $\rho= \|p^*-q^*\|$. Let $f(q):= \|q- p^*\|$. As $f(q^*) \leq f(q)$ for all $q \in Q$, there exists  
$u'\in \partial f(q^*)= \{u \in \B^\circ: u^T( q^*- p^*) = f(q^*)\}$ such that $-u' \in N(Q,q^*)$. Thus, for $u^* = - u'$, 
$$ \rho = f(q^*) = (u^*)^T (p^* - q^*) = (u^*)^T p^* - \max_{q\in Q} (u^*)^T q \leq h(P, u^*) - h(Q, u^*)  $$ 
$$\leq |h(P, u^*) - h(Q, u^*)| \leq \max_{u \in \B^\circ} |h(P, u) - h(Q, u)|.$$
\end{proof}

\paragraph{} Due to the homogeneity of  $h(P, \cdot), h(Q, \cdot)$ and $|\cdot|$, the maximum in \eqref{eq:hausdorff3} is attained for some vector $u^* \in \bd(\B^\circ)$. But since the function $f(u):=|h(P, u) - h(Q, u)|$ is not convex in general, the maximization in \eqref{eq:hausdorff3} cannot be restricted to $\ext(\B^\circ)$.

\section{The Evaluation Problem}
\label{sec:hausdorffEval}
\paragraph{} In this section, we study the complexity of computing the Hausdorff distance of two fixed polytopes. For the case of convex polygons in the plane, this question has first been studied in \cite{atallah-hausdorff}. The paper \cite{abb-amps-95} investigates the case of finite point sets and simple polygons in the plane. Both papers settle the issue by giving efficient algorithms that solve the respective problems. In \cite{alt-hausdorff}, these algorithms are extended in order to compute the Hausdorff distance between a union of $n$ and one of $m$ simplices of dimension $k$ in $\R^d$ in time $O(nm^{k+2})$.

Here, we study the computational complexity of the Hausdorff distance of two polytopes in arbitrary unbounded dimension. We investigate parameterized decision problems for a fixed $p$-norm and different presentations of the polytopes. Thus, throughout this section, let $p \in \N\cup \{\infty\}$ be fixed. 

The case where both input polytopes are given in $\CV$-presentation (\textsc{Hausdorff}$_p$-$\CV$-$\CV$) is stated explicitly as Problem~\ref{prb:hausdorffVV}. The case of two $\CH$-polytopes (\textsc{Hausdorff}$_p$-$\CH$-$\CH$) and the case of  mixed presentations (\textsc{Hausdorff}$_p$-$\CV$-$\CH$) are defined in the same way. In all three cases the dimension $d$ of the ambient space is part of the input and the parameter of the problem. For $p=\infty$, we interpret $\delta_\infty (P,Q)^\infty = \delta_\infty(P,Q)$.

\begin{prb}[\textsc{Hausdorff}$_p$-$\CV$-$\CV$]
\begin{tabular}{ll}
\textbf{Input:}		&  $d\in \N, n,m\in \N , p_1,\dots, p_n, q_1, \dots, q_m \in \Q^d, \rho \in \Q$ \\
\textbf{Parameter:} & $d$\\
\textbf{Question:} 	& Is $\delta_p \Bigl( \conv\{p_1,\dots, p_n\}, \conv\{q_1,\dots, q_m\} \Bigr)^p \geq \rho $?\\
\end{tabular}
\label{prb:hausdorffVV}
\end{prb}

\subsection{Tractability Results}
\paragraph{} In this subsection, we show that \textsc{Hausdorff}$_p$-$\CV$-$\CV$ can \emph{in principle} be solved efficiently. The restriction \enquote{in principle} is due to the fact, that for $p \in \N \setminus \{1,2\}$ (the $p$-th power of) the distance $d_p(x,P)^p$ of a point $x$ to a polytope $P$ might not be rational and therefore not computable in polynomial time. For $p\in \{1,2,\infty\}$, however, the distance to a rational polytope can be computed efficiently for either presentation of the polytope:

\begin{lem}[Computing $d(\cdot, P)$]
If the unit ball $\B\subseteq\R^d$ is a polytope given in $\CV$-presentation or in $\CH$-presentation,
then for any point $x\in \Q^d$ and any rational polytope $P\subseteq \R^d$ in $\CV$- or $\CH$-presentation, the distance
$d(x,P)$ can be computed in polynomial time.

For $\B= \B_2$, any point $x\in \Q^d$, and any rational polytope $P\subseteq \R^d$ in $\CV$- or $\CH$-presentation, $d_2(x,P)^2$ can be computed in polynomial time.
\label{lem:distLP}
\end{lem}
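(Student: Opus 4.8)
The plan is to reduce each distance computation to a linear program (for $p \in \{1,\infty\}$) or a convex quadratic program (for $p = 2$) of size polynomial in the input, and then invoke the polynomial-time solvability of such programs over the rationals. The key observation is that in all four combinations of presentation (unit ball in $\CV$ or $\CH$, polytope $P$ in $\CV$ or $\CH$), the constraint $\|x - p\| \le \rho$ for $p \in P$ can be written as a polynomial-size system of linear (in)equalities in suitable auxiliary variables.

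First I would handle the polytope side. If $P$ is given in $\CH$-presentation, say $P = \{y : A y \le b\}$, then $p \in P$ is directly the linear system $A p \le b$ in the variable $p$. If $P$ is in $\CV$-presentation, $P = \conv\{v_1,\dots,v_k\}$, then $p \in P$ becomes $p = \sum_j \lambda_j v_j$, $\sum_j \lambda_j = 1$, $\lambda_j \ge 0$, again linear and of size polynomial in the description of $P$. Next I would handle the ball side. We want to minimize $\rho \ge 0$ subject to $x - p \in \rho\B$. If $\B$ is given in $\CH$-presentation, $\B = \{z : C z \le \ones\}$ (taking the right-hand side to be all ones after normalization, which is possible since $\B$ contains the origin in its interior), then $x - p \in \rho \B$ is exactly $C(x-p) \le \rho\,\ones$, linear in $(p,\rho)$. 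If $\B$ is in $\CV$-presentation, $\B = \conv\{w_1,\dots,w_\ell\}$ (which, since $\B$ is $0$-symmetric, we may take as $\pm w_1,\dots$), then $x - p = \rho \sum_i \mu_i w_i$ with $\sum_i \mu_i = 1$, $\mu_i \ge 0$; substituting $\nu_i := \rho\mu_i$ linearizes this to $x - p = \sum_i \nu_i w_i$, $\sum_i \nu_i = \rho$, $\nu_i \ge 0$. In every case we obtain a linear program: minimize $\rho$ over a polytope in $\R^{d + O(\text{input size})}$ described by polynomially many rational inequalities. By the polynomial-time solvability of linear programming over $\Q$ (e.g.\ Khachiyan's ellipsoid method, or Tardos' strongly polynomial bound for the combinatorial part), $d(x,P)$ is computed exactly in polynomial time, and it is rational.

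For the Euclidean case $\B = \B_2$, I would replace the ball constraint by the objective directly: minimize $\|x - p\|_2^2 = (x-p)^T(x-p)$, a convex quadratic, subject to $p \in P$ (expressed as a linear system by one of the two cases above). This is a convex quadratic program with rational data, hence solvable in polynomial time and with rational optimal value, so $d_2(x,P)^2$ is obtained in polynomial time; note one cannot ask for $d_2(x,P)$ itself since that is generally irrational, which is exactly why the statement is phrased in terms of the square.

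The main obstacle — or rather the point requiring care rather than difficulty — is bookkeeping the bit-complexity: one must check that the auxiliary substitutions ($\nu_i = \rho\mu_i$ and $\lambda_j$'s) and the normalization of $\B$'s $\CH$-description introduce only polynomially many variables and constraints with polynomially bounded encoding length, so that the LP/QP solvers run in polynomial time and return an exactly rational optimum. There is also a minor subtlety in the $\CV$-ball case, that a minimizing $\rho$ and a valid convex combination always coexist at an optimal vertex of the lifted polyhedron; this follows from standard LP duality/attainment since the feasible region is nonempty (as $P \neq \emptyset$) and $\rho$ is bounded below by $0$.
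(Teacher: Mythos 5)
Your proposal is correct and follows essentially the same route as the paper: all four presentation combinations are reduced to linear programs (using the same linearization $\sum_i \nu_i = \rho$ for a $\CV$-presented unit ball, as in the paper's formulation \eqref{eq:distVV}), and the Euclidean case is handled as a convex quadratic program over the linear description of $P$, solvable in polynomial time with rational optimum by Khachiyan-type results. The only cosmetic difference is your explicit normalization of the $\CH$-presentation of $\B$ and the bit-complexity bookkeeping, which the paper leaves implicit.
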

\begin{proof}
The cases where a $\CV$- or $\CH$-presentation of $\B$ is available can be solved by Linear Programming. We exemplarily show two combinations; the other two can be handled with the same techniques. 

If  $\B= \{x\in \R^d: u_j^T x \leq 1 ~\forall j\in [m]\}$ and  $P= \{x\in \R^d: Ax \leq b\}$, then
\begin{equation}
\begin{array}{llll}
d(x,P)= & \min 	& \rho \\
		& s.t.	&\rho \geq u_j^T (x - y)	& \forall j\in [m] \\
		&		& Ay \leq b \\
		&		& y\in \R^d \\
		& 		& \rho\in \R.
\end{array}
\label{eq:distVH}
\end{equation} 
If $\B= \conv\{v_1,\dots, v_m\}$ and $P= \conv\{p_1,\dots, p_n\}$, then
\begin{equation}
\begin{array}{llll}
d(x,P)=	& \min	&\rho \\
		&s.t.	& x - \sum_{i=1}^n \lambda_i p_i = \sum_{j=1}^m \mu_j v_j	\\
		&		& \sum_{i=1}^n \lambda_i= 1 \\
		&		& \sum_{j=1}^m \mu_j= \rho\\
		&		& \mu_j \geq 0 	& \forall j\in [m] \\
		&		& \lambda_i \geq 0 	& \forall i\in [n] \\
		&		& \rho \geq 0.
\end{array}
\label{eq:distVV}
\end{equation}

Now, let $\B= \B_2$. If $P= \{x\in \R^d: Ax \leq b\}$, then
\begin{equation}
\begin{array}{llll}
d_2(x,P)^2=	& \min	&(p-x)^T (p-x) \\
		&s.t.	& Ap \leq b.	\\
		&		& p\in \R^d.
\end{array}
\label{eq:distanceQuadProgH}
\end{equation}
By \cite{khach-79}, the optimal solution of the convex quadratic program with linear constraints \eqref{eq:distanceQuadProgH} 
is rational and can be found in polynomial time. Again, the case where $P$ is given in $\CV$-presentation can be handled with the same formulation as in \eqref{eq:distVV}
\end{proof}

\paragraph{} Together with the convexity of the distance function (Remark~\ref{rem:distanceIsConvex}), Lemma~\ref{lem:distLP} implies the following:

\begin{theo}[Tractability of \textsc{Hausdorff}$_p$-$\CV$-$\CV$]
If a $\CV$- or $\CH$-presentation of the unit ball $\B$ can be computed in polynomial time, the Hausdorff distance of two rational $\CV$-polytopes $P,Q \subseteq \R^d$  can be computed in polynomial time. 

In particular, \textsc{Hausdorff}$_1$-$\CV$-$\CV$, \textsc{Hausdorff}$_2$-$\CV$-$\CV$ and \textsc{Hausdorff}$_\infty$-$\CV$-$\CV$ are in $\P$.
\label{theo:HausdorffVVeasy}
\end{theo}
\begin{proof}
Let $P:= \conv\{p_1,\dots, p_n\}$ and $Q:= \conv\{q_1, \dots, q_m\}$ be rational. Since $d(\cdot, Q)$ is convex, $\max_{x\in P} d(x, Q)$ is attained at 
a vertex of $P$ as well as $\max_{x\in Q} d(x, P)$ is attained at a vertex of $Q$. Hence,
$\delta(P,Q)= \max \bigl\{\max\{d(p, Q): p \in P\}, \max\{d(q, P): q\in Q\}\bigr\}$ can be computed by determining  $d(p_i, Q)$ and $d(q_j,P)$ for all $i\in [n]$ and $j\in [m]$.  These values can be computed efficiently by Lemma~\ref{lem:distLP}.
\end{proof}

\paragraph{} Using the argument from the proof of Theorem~\ref{theo:HausdorffVVeasy}, we can at least state the following for general $p \in \N$.

\begin{rem}[Approximation for \textsc{Hausdorff}$_p$-$\CV$-$\CV$]
For $p \in \N$ and rational polytopes $P= \conv \{p_1,\dots, p_n\}\subseteq \R^d$ and $Q= \conv\{q_1,\dots, q_m\}\subseteq \R^d$, the Hausdorff distance $\delta_p(P,Q)$ can be approximated to any accuracy in polynomial time by using the Ellipsoid Method \cite{gls-93} for the approximation of $d(p_i,Q)$ and $d(q_j, P)$ for $i\in [n]$ and $j \in [m]$.
\end{rem}

\begin{rem}[Direct approximation of $\delta_2(P,Q)$]
For two rational $\CV$-polytopes $P,Q \subseteq \R^d$, the Hausdorff distance $\delta_2(P,Q)$ can also be approximated to any accuracy in polynomial time by solving the Second Order Cone Program which arises from directly spelling out the definition of the Hausdorff distance in \eqref{eq:hausdorffDefi2}, cf. the SOCP in the proof of Theorem~\ref{theo:matchingSOCP} with the restrictions $\alpha=1$ and $c= 0$.

\end{rem}

\subsection{Hardness Results}
\paragraph{} We complement the results of the previous subsection by showing that in almost all other cases it is $\NP$-hard and W[1]-hard to bound the Hausdorff distance of two polytopes. For this purpose, we apply the general reduction technique described in \cite{christianFPTintro}. Thus, for some $k \in \N$, we will deal with polytopes in $\R^{2k} $, which we consider as
$$\R^{2k}= \R^2 \times \R^2 \times\dots \times\R^2,$$
i.e. we will think of a vector $x\in \R^{2k}$ as $k$ two-dimensional vectors stacked upon each other. In this setting, the following notation is convenient.

\begin{nota}\label{nota:halfspaceR2}%
By indexing a vector $x \in \R^{2k}$, we always refer to the $k$ two-dimensional vectors $x_1,\dots, x_k \in \R^2$ such that $x= (x_1^T, \dots, x_k^T)^T$. 
Further, for $a \in \R^2$ and $\beta \in \R$, we let\index{$H_\leq^i(a, \beta)$}
$$H_\leq^i(a, \beta):= \{x \in \R^{2k}: a^T x_i \leq \beta\}.$$
\end{nota}

\paragraph{} The following technical remark enables the reduction technique of \cite{christianFPTintro} to be applied in the proof of Lemma~\ref{lem:hausdorff1HW1}. 

\begin{lem}[Hausdorff distance of certain direct products] \label{lem:hausdorffDirectProd}%
For $k \in \N$, let $P_1,\dots, P_k \subseteq \R^2$ and $Q_1,\dots, Q_k \subseteq \R^2$ be polytopes with $Q_i \subseteq P_i$ for all $i \in [k]$ and further $P:= P_1\times \dots \times P_k \subseteq \R^{2k}$ and $Q:= Q_1\times \dots \times Q_k\subseteq \R^{2k}$. Then, for all $p\in \N$,
$$ \delta_p(P,Q)^p = \sum_{i=1}^k \delta_p(P_i, Q_i)^p.$$
\end{lem}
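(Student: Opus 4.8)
The plan is to reduce the computation of the Hausdorff distance on the product to the coordinate-wise Hausdorff distances by exploiting the product structure together with the two ``for-all'' formulations available in Definition~\ref{defi:hausdorff}. Since $Q_i \subseteq P_i$ for every $i$, we have $Q \subseteq P$, so $\delta_p(P,Q) = \max_{x \in P} d_p(x, Q)$ and likewise $\delta_p(P_i,Q_i) = \max_{x_i \in P_i} d_p(x_i, Q_i)$; this removes one of the two maxima in \eqref{eq:hausdorffDefi} and is the reason the hypothesis $Q_i \subseteq P_i$ is imposed.

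First I would observe that for a point $x = (x_1^T,\dots,x_k^T)^T \in \R^{2k}$ and the product $Q = Q_1 \times \dots \times Q_k$, a nearest point of $Q$ to $x$ in the $p$-norm can be chosen coordinate-block-wise: if $q^{(i)} \in Q_i$ achieves $d_p(x_i, Q_i)$, then $q := ((q^{(1)})^T, \dots, (q^{(k)})^T)^T \in Q$ and $\|x - q\|_p^p = \sum_{i=1}^k \|x_i - q^{(i)}\|_p^p = \sum_{i=1}^k d_p(x_i, Q_i)^p$, and conversely no point of $Q$ can do better since the $p$-th power of the distance splits additively over the blocks. Hence $d_p(x, Q)^p = \sum_{i=1}^k d_p(x_i, Q_i)^p$ for every $x \in \R^{2k}$.

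Next I would take the maximum over $x \in P = P_1 \times \dots \times P_k$. Because the map $x \mapsto \sum_{i=1}^k d_p(x_i, Q_i)^p$ is a sum of functions each depending on a separate block $x_i$, and $P$ is the Cartesian product of the $P_i$, the maximum separates: $\max_{x \in P} \sum_{i=1}^k d_p(x_i, Q_i)^p = \sum_{i=1}^k \max_{x_i \in P_i} d_p(x_i, Q_i)^p$, the optimal $x$ being obtained by concatenating optimizers of the individual blocks. Combining the two displayed identities with $\delta_p(P,Q)^p = \max_{x\in P} d_p(x,Q)^p$ and $\delta_p(P_i,Q_i)^p = \max_{x_i \in P_i} d_p(x_i,Q_i)^p$ yields $\delta_p(P,Q)^p = \sum_{i=1}^k \delta_p(P_i,Q_i)^p$, as claimed.

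I expect no serious obstacle here; the only point requiring a little care is the very first reduction step, namely justifying that $Q_i \subseteq P_i$ for all $i$ forces $\delta_p(P,Q) = \max_{x\in P} d_p(x,Q)$ (so that the term $\max_{q \in Q} d_p(q,P)$ in \eqref{eq:hausdorffDefi} vanishes because $Q \subseteq P$ gives $d_p(q,P) = 0$ for all $q \in Q$), and the analogous statement for each coordinate block. Everything else is the elementary observation that $\|\cdot\|_p^p$ is additive across the product blocks $\R^2 \times \dots \times \R^2$, which makes both the inner distance minimization and the outer maximization over $P$ decompose.
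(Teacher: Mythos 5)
Your proposal is correct and follows essentially the same route as the paper: use $Q_i \subseteq P_i$ to drop one of the two maxima, decompose $d_p(x,Q)^p$ additively over the blocks, and then separate the maximum over the product $P$ into a sum of block maxima. The paper states this in a single line, while you supply the (correct) details; there is no substantive difference.
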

\begin{proof}
Since $Q_i \subseteq P_i$ for all $i \in [k]$ and the definitions of $Q$ and $P$ as cartesian products, we have $\delta_p(P,Q)^p = \max_{x \in P} \sum_{i=1}^k \delta_p(x_i, Q_i)^p = \sum_{i=1}^k \max_{x_i \in P_i} \delta_p(x_i, Q_i)^p = \sum_{i=1}^k \delta_p(P_i, Q_i)^p$.

\end{proof}

\paragraph{} For the first hardness proof, we will reduce the W[1]-complete problem \textsc{Clique} to  \textsc{Hausdorff}$_1$-$\CH$-$\CH$. The formal parametrized decision problem of \textsc{Clique} is given in Problem~\ref{prb:clique}; a proof of its W[1]-completeness can be found e.g.~in \cite[Theorem 6.1]{fg-06}. 
Moreover, it is shown in \cite{Chen2005216} that \textsc{Clique}  cannot be solved in time $n^{o(k)}$, unless the Exponential Time Hypothesis\footnote{The Exponential Time Hypothesis conjectures that n-variable 3-CNFSAT cannot be solved in $2^{o(n)}$-time; cf. \cite{ip-01}.} fails. 

\begin{prb}[\textsc{Clique}] \label{prb:clique}%
\begin{tabular}{ll}
\textbf{Input:} &$n,k \in \N$, $E \subseteq \left({[n]} \atop 2 \right)$ \\
\textbf{Parameter:} & $k$ \\
\textbf{Question:}& Does $G=([n], E)$ contain a clique of size $k$?
\end{tabular}
\end{prb}

\begin{lem}[{W[1]-Hardness of \textsc{Hausdorff}$_1$-$\CH$-$\CH$}]
\textsc{Hausdorff}$_1$-$\CH$-$\CH$ is W[1]-hard, even if restricted to 0-symmetric polytopes.
\label{lem:hausdorff1HW1}
\end{lem}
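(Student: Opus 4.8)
The plan is to reduce \textsc{Clique} to \textsc{Hausdorff}$_1$-$\CH$-$\CH$ restricted to $0$-symmetric polytopes, using the direct-product construction enabled by Lemma~\ref{lem:hausdorffDirectProd}. Given an instance $(n,k,E)$ of \textsc{Clique}, I would work in $\R^{2k} = \R^2\times\dots\times\R^2$ (one $\R^2$-block per vertex of the sought clique) and build two $0$-symmetric $\CH$-polytopes $P \supseteq Q$ so that $\delta_1(P,Q)$ is large (above a computable rational threshold $\rho$) if and only if $G$ has a $k$-clique. The outer polytope $P$ should be a fixed $0$-symmetric polygon in each block, say $P = P_0\times\dots\times P_0$ with $P_0\subseteq\R^2$ a regular-ish $0$-symmetric polygon whose $n$ ``extreme directions'' we put in bijection with the vertices $[n]$ of $G$; the point is that $\delta_1(P_0, R)$ for a $0$-symmetric $R\subseteq P_0$ is maximized by cutting off exactly one antipodal pair of vertices of $P_0$. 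The inner polytope $Q$ is obtained from $P$ by adding, for every non-edge $\{u,v\}\notin E$, a constraint of the form $H_\leq^{?}(\cdot,\cdot)$ that simultaneously forbids block $i$ from ``selecting'' $u$ and block $j$ from ``selecting'' $v$ — i.e.\ a constraint coupling two blocks so that not both of the corresponding vertices can be cut off. Then a choice of which vertex each block selects corresponds to a $k$-subset of $[n]$, and that subset survives all the non-edge constraints exactly when it is a clique; in that case each block contributes the full per-block gap $\delta_1(P_0,\text{(truncated }P_0))$, and by Lemma~\ref{lem:hausdorffDirectProd} the total is $k$ times that value, whereas any non-clique forces some block to lose part of its contribution.

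Concretely, the key steps in order: (1) Fix the gadget polygon $P_0$ and compute, for a single ``selected vertex'' $u$, the $0$-symmetric truncation $P_0^{(u)}$ obtained by slicing off the antipodal pair of vertices of $P_0$ associated with $u$, and evaluate $g := \delta_1(P_0, P_0^{(u)})$ explicitly (a rational number by the $\ell_1$ structure, independent of $u$ by symmetry of the gadget). (2) Define $P := (P_0)^k$ and $Q := P \cap \bigcap_{\{u,v\}\notin E} H$, where $H$ is the halfspace-in-$\R^{2k}$ that, restricted to blocks $i<j$, cuts off the region where block $i$ points toward $u$ \emph{and} block $j$ points toward $v$; since a non-edge is unordered and we must block it for every pair of blocks $i\neq j$, the set of inequalities has size polynomial in $n$ and $k$, and keeping everything $0$-symmetric is arranged by also adding the negated constraint. (3) Show $Q\subseteq P$ and that $Q$ is still $0$-symmetric and bounded (it contains a small ball around $0$ since all added constraints are strict at $0$, and it is contained in $P$). (4) Prove the reduction is correct: if $S=\{s_1<\dots<s_k\}$ is a clique, the point that in block $i$ realizes the ``$s_i$-extreme'' vertex of $P_0$ still lies in $Q$, no non-edge constraint is active, so $\delta_1(P_i,Q_i)=g$ in every block and $\delta_1(P,Q)^1 = kg$ by Lemma~\ref{lem:hausdorffDirectProd} (with $p=1$); conversely, if $\delta_1(P,Q)\geq kg$ then by the product formula each block must achieve the maximal per-block gap $g$, which (by the gadget's extremal analysis) is only possible if block $i$ is truncated exactly at one vertex-pair $s_i$, and the $s_i$ must be pairwise adjacent because otherwise a non-edge constraint would have cut the relevant corner of $Q$ in blocks $i,j$ and reduced $\delta_1(P_j,Q_j)$ below $g$; hence $\{s_1,\dots,s_k\}$ is a $k$-clique. (5) Set $\rho := kg$ and note all data are polynomial-size rationals with $d=2k$, so this is a parameterized reduction from \textsc{Clique} with parameter $k$, giving W[1]-hardness.

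The main obstacle I expect is step (1) together with the extremal characterization used in step (4): I need a gadget polygon $P_0$ for which (a) the maximal Hausdorff $\ell_1$-gap between $P_0$ and a $0$-symmetric polytope inside it is attained \emph{only} by truncating exactly one antipodal vertex pair (so that the argument ``$\delta_1(P_i,Q_i)=g \Rightarrow$ block $i$ encodes a single vertex'' is valid), and (b) the coupling constraints for non-edges can be written so that they are inactive precisely on clique selections and otherwise strictly cut into the relevant corner, reducing that block's gap. Balancing these two requirements while keeping $P_0$ a rational $0$-symmetric polygon with $\Theta(n)$ vertices (so the whole construction is polynomial) is the delicate part; once the gadget is pinned down, the product formula of Lemma~\ref{lem:hausdorffDirectProd} does the heavy lifting of turning ``$k$ independent good blocks'' into ``additive Hausdorff distance $kg$,'' and the rest is bookkeeping. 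A secondary technical point is ensuring $0$-symmetry survives the addition of the non-edge constraints — handled by symmetrizing each added halfspace — and that $Q$ remains full-dimensional and bounded, which follows because $Q$ sits between a small ball and $P$.
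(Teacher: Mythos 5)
Your high-level skeleton (reduce from \textsc{Clique}, work in $\R^{2k}$ with one planar block per clique position, use a $0$-symmetric polygon gadget whose facets are in bijection with the graph vertices, an $\varepsilon$-truncation creating a per-block gap $g$, coupling constraints for non-edges, threshold $kg$) is exactly the paper's, but the construction as you specify it has the two polytopes playing the wrong roles, and this breaks the reduction. In your step (2) you set $P:=(P_0)^k$ and obtain $Q$ from $P$ by adding only the non-edge coupling constraints; the per-block truncation from your step (1) never actually enters the definition of $Q$. With that arrangement, a product vertex of $P$ whose blocks select a $k$-clique violates \emph{no} coupling constraint, hence lies in $Q$ and contributes distance $0$ --- the opposite of your claim that each block then "contributes the full per-block gap $g$." What your $Q$ detects is the existence of selections hitting many \emph{non}-adjacent pairs, not cliques. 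The paper does it the other way around: the \emph{inner} body $Q$ is the clean direct product of the $\varepsilon$-truncated polygons (so that $d_1(x,Q)$ decomposes block-wise), while the coupling constraints $E^{ij}_{vw}, F^{ij}_{vw}$ are imposed on the \emph{outer} body $P\subseteq P_2$, so that the only way a point of $P$ can realize the maximal distance $k\varepsilon$ to $Q$ is to push every block out to a $c_{v_i}$-facet with the $v_i$ distinct and pairwise adjacent. Related to this, your use of Lemma~\ref{lem:hausdorffDirectProd} is not legitimate in either direction of step (4): that lemma requires \emph{both} bodies to be direct products, whereas your $Q$ (and the paper's $P$) carries coupling constraints and is not a product; the paper applies the lemma only to the product pair $(P_2,Q)$ and handles the cut-down $P$ separately via the support-function formulation of Lemma~\ref{lem:hausdorff3} together with the sharpness statement that the maximum $h(P_1,u)-h(Q_1,u)=\varepsilon$ is attained \emph{only} at $u\in\{c_1,\dots,c_n\}$ --- which is precisely the extremal fact you flag as your unresolved "main obstacle."

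Two further omissions would remain even after reorienting the construction. First, you need distinctness gadgets: without the paper's constraints $E^{ij}_{vv}\cap F^{ij}_{vv}$ (and the $\pm$-versions handling antipodal facets of the $0$-symmetric polygon, which represent the same graph vertex), a single graph vertex selected by all $k$ blocks would fake a "clique," so the reduction would accept no-instances. Second, the gadget itself must be produced with polynomial coding length and with certified facet structure; the paper does this by rounding the vertices of a regular $2n$-gon to a fine grid (citing \cite[Lemma 2.6]{normmaxW1}) and then computing $\varepsilon$ explicitly from the resulting $\CH$-presentation, whereas your proposal leaves both the rational gadget and the value of $g$ unspecified. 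As it stands, the proposal is a reasonable plan sharing the paper's architecture, but the stated reduction is incorrect (clique instances yield small, not large, Hausdorff distance), and the key quantitative steps --- the product decomposition on a non-product body and the uniqueness of the extremal directions --- are either misapplied or left open.
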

\begin{proof}
Let $(m,k,E)$ be an instance of  \textsc{Clique} with $m$ vertices. Define $n:= 2m$ and let  $p_1',\dots, p_{2n}' \in \S_2$ be the vertices of a regular $2n$-gon in the Euclidean unit circle in the plane.  Let  $U:= \frac{1}{n^{2p}k^2}$ and, for $v \in [n]$, let $\bar p_v$ be the rounding of $p_v'$ to the grid $\frac{U}{2} \Z^2$ and define $\bar p_v = -\bar p_{v-n}$ for $v\in [2n]\setminus[n]$.

As demonstrated in \cite[Lemma 2.6]{normmaxW1}, it is easy to verify that for
$$P_1:= \{x \in \R^2: \bar p_v^T x \leq 1 ~\forall v\in [2n]\} $$ 
each inequality $\bar p_v^T x \leq 1 $ induces a facet of $P_1$. As well, the coding length of the above $\CH$-presentation of $P_1$ is polynomial in $m$ and $k$. By relabeling and scaling, we can achieve that

$$P_1 = \bigcap_{v\in [n]} \bigl(H_\leq(a_v ,\beta_v) \cap H_\leq(c_v, \gamma_v)\bigr), $$ 
where the facets induced by $a_v$ and $c_v$ alternate along the boundary of $P_1$, $\|a_v\|_\infty =\|c_v\|_\infty =1$ and $\beta_v,\gamma_v> 0$ for all $v\in [n]$, cf. Figure~\ref{fig:alternatingFacets}. Observe that the sets $\{c_1,\dots, c_n\}$ and $\{a_1,\dots, a_n\}$ are 0-symmetric.

Now, compute all vertices of $P_1$ in time $O(n\log(n))$ (cf. e.g.~\cite{CGdeBerg}), such that $P_1= \conv\{p_1,\dots, p_{2n}\}$ and let, for $v \in [n]$, 
$\varepsilon_v:= \gamma_v - \max \{c_v^T p_w: c_v^T p_w \neq \gamma_v, ~ w \in [2n]\}>0,$ $\varepsilon:= \frac{1}{10} \min \{\varepsilon_v: v \in [n]\}>0$,
 and 
$$ Q_1 := \{x \in \R^2: a_v^T x \leq \beta_v,~c_v^T x \leq \gamma_v -\varepsilon ~\forall v \in [n]\}.$$

\begin{figure}[h]
\centering
\includegraphics[width=0.6\textwidth]{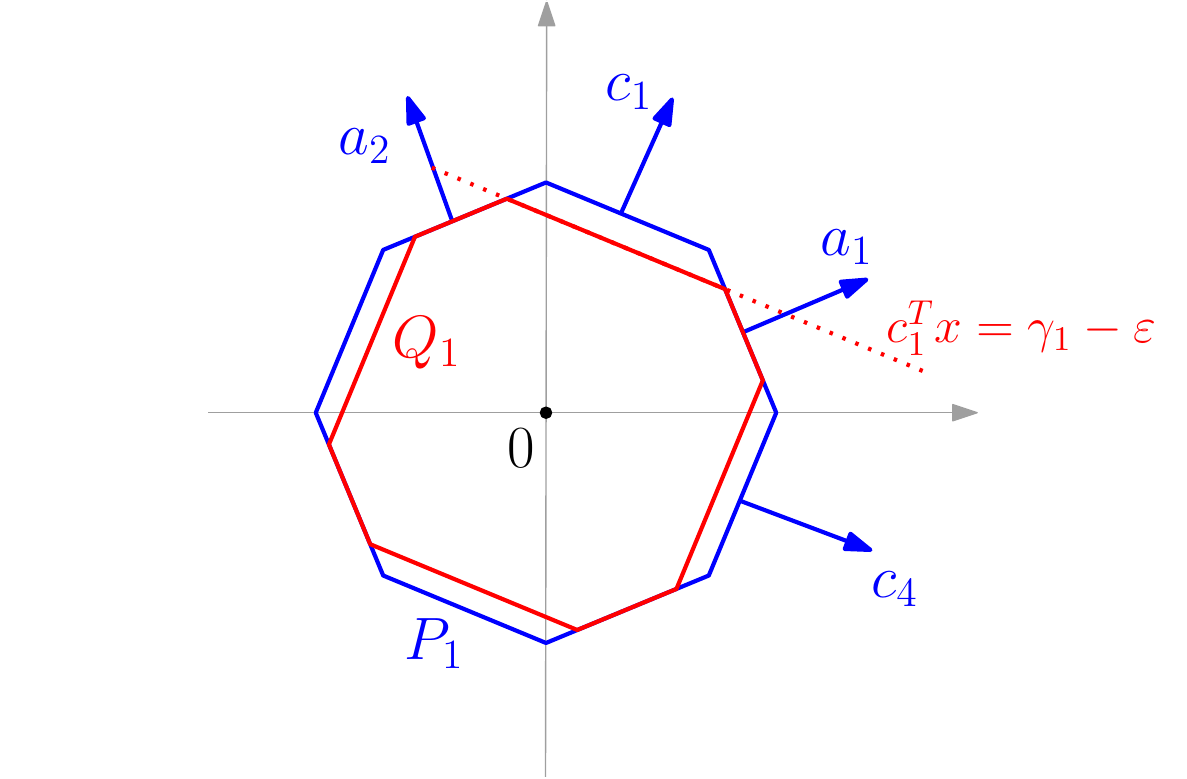}
\caption[Polytopes in the reduction of \textsc{Clique} to \textsc{Hausdorff}$_1$-$\CH$-$\CH$]{Illustration of the two polytopes $P_1,Q_1 \subseteq \R^2$ from the reduction in the proof of Lemma~\ref{lem:hausdorff1HW1}.\label{fig:alternatingFacets}}
\end{figure}

For arbitrary  $u \in \bd(\B_\infty^2)$, let $p(u) \in P_1$ and $q(u) \in Q_1$ be vertices of the two polytopes with $u \in N(P_1, p(u))= N(Q_1, q(u))= \pos\{c_v, a_w\}$ for some $v \in [n]$ and $w \in \{v-1, v\}$. Then, we can express $u= \lambda c_v + \mu a_w$ with $\lambda, \mu\geq 0$. Since $\|c_v\|_\infty = 1$, there exists $e \in \{\pm e_1,\pm e_2\}$ such that $e^T c_v=1$. We can assume without loss of generality that $n$ is sufficiently large such that $e^T a_w >0$, and thus,
$1 \geq e^T u = \lambda{e^T c_v} + \mu {e^T a_w} \geq  \lambda$,
where equality holds if and only if $\lambda =1$ and $\mu =0$.

Therefore, $h(P_1,u) - h(Q_1,u)=  \lambda c_v^T(p(u)-q(u)) + \mu a_w^T (p(u)-q(u)) = \lambda \varepsilon \in [0, \varepsilon]$, and hence, by Lemma~\ref{lem:hausdorff3},
\begin{equation}
\delta_1(P_1, Q_1)=  \max_{u\in \B_\infty^2}(h(P_1, u) - h(Q_1, u)) = \varepsilon,
\label{eq:hausdorffR2}
\end{equation}
where the maximum in \eqref{eq:hausdorffR2} is attained if and only if $u \in \{c_1,\dots, c_{n}\}$.

Using Notation~\ref{nota:halfspaceR2}, let
$$P_2:= \bigcap_{i\in [k]} \bigcap_{v \in [n]} \bigl(H_\leq^i(a_v, \beta_v)\cap H_\leq^i (c_v, \gamma_v)\bigr) \subseteq \R^{2k}$$ 
 and 
$$ Q:= \bigcap_{i\in [k]} \bigcap_{v \in [n]} \bigl(H_\leq^i(a_v, \beta_v)\cap H_\leq^i (c_v, \gamma_v-\varepsilon)\bigr) \subseteq \R^{2k}.$$ 

By Lemma~\ref{lem:hausdorffDirectProd} and \eqref{eq:hausdorffR2}, we obtain $\delta_1(P_2,Q) = k\varepsilon$.

For $i,j \in [k]$ and $v,w \in [m]$ define
$$E^{ij}_{vw}:= \{x \in \R^{2k}:-(\gamma_v + \gamma_w - \varepsilon)\leq  c_v^T x_i + c_w^T x_j \leq \gamma_v + \gamma_w - \varepsilon\},$$
$$F^{ij}_{vw}:= \{x \in \R^{2k}:-(\gamma_v + \gamma_w - \varepsilon)\leq  c_v^T x_i - c_w^T x_j \leq \gamma_v + \gamma_w - \varepsilon\}$$
and, for $N:= \binom{[m]}{2}\setminus E$, let 
$$P:= P_2 \cap \bigcap_{{\{v,w\}\in N}\atop{{i,j \in [k], i\neq j}}} (E^{ij}_{vw}\cap F^{ij}_{vw}) \cap \bigcap_{{v \in [m]}\atop{{i,j \in [k], i\neq j}}} (E^{ij}_{vv}\cap F^{ij}_{vv})$$

\paragraph{} Now, by definition, $P$ and $Q$ are 0-symmetric and we claim that $\delta_1(P,Q)= k\varepsilon$ if and only if $G=([m], E)$ has a clique of size $k$. Since \textsc{Clique} is W[1]-complete \cite[Theorem~6.1]{fg-06}, this completes the hardness proof for \textsc{Hausdorff}$_1$-$\CH$-$\CH$.

Assume $\delta_1(P,Q)= k\varepsilon$. Since $Q \subseteq P$, there exists $u^* \in \B_\infty^{2k}$ such that $h(P, u^*) = h(Q, u^*) + k\varepsilon$. Since $P \subseteq P_2$, we obtain via the sharpness condition in \eqref{eq:hausdorffR2}, that $u^* = ( c_{v_1}^T, \dots, c_{v_k}^T)^T$ for some $v_1,\dots,  v_k \in [n]$. Hence, there is a vector $x^* \in P$ such that 
$$ c_{v_1}^T x_1^* = \gamma_{v_1}, \quad \dots, \quad c_{v_k}^T x_k^*= \gamma_{v_k}.$$
If $c_{v_j} = \pm c_{v_i}$ for some $i\neq j \in [k]$, this would imply $c_{v_i}^T x_i^* \pm c_{v_i}^T x_j^*= 2\gamma_{v_i}$ which contradicts $x^* \in P \subseteq E_{v_iv_i}^{ij} \cap F_{v_iv_i}^{ij}$. Hence, defining for $i\in [k]$,

$$u_i:= \left\{\begin{array}{ll} v_i &\text{ if } v_i \leq m \\ v_i -m &\text{ else} \end{array}\right.$$
yields $|\{u_1,\dots, u_k\}|=k$. In the same way, $c_{v_i}^T x_i^* + c_{v_j}^T x_j^* = \gamma_{v_i} +\gamma_{v_j}$ implies that $\{u_i,u_j\} \in E$. Thus, $\{u_1,\dots, u_k\}$ is the set of vertices of a clique of size $k$ in $G$.

If on the other hand $\{v_1,\dots, v_k\}\subseteq [m]$ is the vertex set of a $k$-clique, then $u^*:= (c_{v_1}^T, \dots, c_{v_k}^T)^T $ satisfies $h(P, u^*)= h(Q, u^*) + k\varepsilon$ and therefore $\delta_1(P,Q)=k\varepsilon$.
\end{proof}

\paragraph{} For all remaining values $p\in \N$, hardness can be established by a much simpler reduction of the following problem which is W[1]-hard for all $p \in \N\setminus\{1\}$, cf. \cite{normmaxW1}.

\begin{prb}[{\normmax[p]}] \label{prb:normmax}%
\begin{tabular}{ll}
\textbf{Input:} &$d \in \N$, $\gamma \in \Q$, rational $\CH$-presentation of a 0-symmetric polytope $P \subseteq \R^d$ \\
\textbf{Parameter:} & $d$ \\
\textbf{Question:}& Is $\max \{\|x\|_p^p: x \in P \} \geq \gamma $?
\end{tabular}
\end{prb}

\begin{lem}[{Reduction of \normmax$_p$}]
For all $p \in \N \cup \{\infty\}$, \normmax[p] can be reduced in polynomial and FPT (with respect to the dimension) time to \textsc{Hausdorff}$_p$-$\CH$-$\CH$ and \textsc{Hausdorff}$_p$-$\CV$-$\CH$
\label{lem:NormmaxToHausdorff}
\end{lem}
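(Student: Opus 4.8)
The goal is to reduce \normmax[p] to the Hausdorff problems in a way that preserves both polynomial-time and FPT (in the dimension) computability. The natural idea is to take the $0$-symmetric $\CH$-polytope $P \subseteq \R^d$ from a \normmax[p] instance, and to compare it to a fixed reference polytope whose Hausdorff distance to $P$ encodes $\max\{\|x\|_p^p : x\in P\}$. Since $P$ is $0$-symmetric and (after a trivial scaling) can be assumed to contain a small ball around the origin, the simplest choice is to compare $P$ with a tiny cube $C_\eta := \eta \B_\infty \subseteq \R^d$ (or with $\{0\}$ up to the rationality caveat), so that every point of $C_\eta$ is close to $P$ and the Hausdorff distance is governed by $\max_{x\in P} d_p(x, C_\eta)$, which in turn is essentially $\max_{x\in P}\|x\|_p$ minus a controlled error term.

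\textbf{Key steps.} First I would normalize: given $(d,\gamma,P)$, rescale $P$ by a rational factor so that $\B_\infty^d \subseteq P$ (this is where $0$-symmetry and full-dimensionality, which we may assume WLOG, are used), and track how $\gamma$ transforms. Second, set $Q := \B_\infty^d$ (available in both $\CH$- and $\CV$-presentation with polynomial coding length, and $0$-symmetric), so that $Q \subseteq P$ and hence $\delta_p(P,Q) = \max_{x\in P} d_p(x,Q)$. Third, show that for $x$ on the boundary of a polytope containing $\B_\infty^d$, the nearest point of $Q$ to $x$ is controlled well enough that $d_p(x,Q)^p$ and $\|x\|_p^p$ differ by a quantity that can be made to vanish or be absorbed — more precisely, I would argue that $\max_{x\in P} d_p(x,Q)^p = \max_{x\in P}\|x\|_p^p - (\text{explicit correction})$, or, if the correction is not clean, rescale $P$ by a large rational factor $t$ so that $d_p(x,Q)^p = (1 - o(1))\|x\|_p^p$ and the threshold comparison $\geq \gamma$ translates to a threshold comparison $\geq \rho$ for an appropriate rational $\rho$ computable from $\gamma$, $t$ and $\eta$. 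Fourth, observe the reduction is computable in polynomial time and that $d$ (the parameter of both problems) is unchanged, giving the FPT reduction. Finally, handle $p=\infty$ separately using the convention $\delta_\infty^\infty = \delta_\infty$ and the fact that $\|\cdot\|_\infty$-distance to a cube is elementary; and note that for \textsc{Hausdorff}$_p$-$\CV$-$\CH$ we keep $P$ in $\CH$-presentation and present $Q = \B_\infty^d$ in $\CV$-presentation (its $2^d$ vertices — wait, that is exponential), so instead for the $\CV$-$\CH$ case one should put $Q$ (the cube) in $\CH$ form and $P$ in $\CV$ form only if a $\CV$-presentation of $P$ is available; since it is not, the cleaner route for \textsc{Hausdorff}$_p$-$\CV$-$\CH$ is to swap roles and use a small simplex or cross-polytope $\pm\eta e_i$ as the $\CV$-polytope, which has $2d$ vertices.

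\textbf{Main obstacle.} The delicate point is the rationality/exactness of the correspondence between $d_p(x,Q)^p$ and $\|x\|_p^p$: the nearest-point projection onto the cube $Q$ is coordinatewise clamping, so $d_p(x,Q)^p = \sum_i (\max\{|x_i| - \eta, 0\})^p$, which for a point $x$ with all $|x_i| \geq \eta$ equals $\sum_i(|x_i|-\eta)^p$ — this is \emph{not} simply a scaling of $\|x\|_p^p$, so one cannot directly read off $\max\|x\|_p^p$. The fix I expect to need is to choose $\eta$ extremely small (polynomially-bounded bit-length, e.g. $\eta$ inverse-exponential is disallowed, so $\eta$ must be something like $1/\mathrm{poly}$) after rescaling $P$ to be large, so that the maximizer $x^*$ of $\|x\|_p^p$ over $P$ has a coordinate of size $\Omega(1)$ while $\eta$ is negligible, and then argue via a gap argument that $\delta_p(P,Q)^p \geq \rho$ iff $\max_{x\in P}\|x\|_p^p \geq \gamma$, with $\rho$ derived from $\gamma$ by accounting for the worst-case cube correction; making the gap argument airtight — i.e. showing the cube correction cannot flip the threshold — is the real work, and it relies on bounds on the vertex coordinates of $P$ from the coding length of its $\CH$-presentation. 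For $p=1$ the correction $\sum_i(|x_i|-\eta) = \|x\|_1 - \eta\cdot|\{i : |x_i|\geq\eta\}|$ is linear and can be matched exactly, so the $p=1$ case (which is anyway covered by Lemma~\ref{lem:hausdorff1HW1}) is easy; the genuine difficulty is uniform handling of all $p \in \N \setminus \{1\}$.
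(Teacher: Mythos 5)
There is a genuine gap, and it stems from overlooking a much simpler exact reduction. You dismiss comparing $P$ with $\{0\}$ because of a ``rationality caveat,'' but no such caveat exists: both \normmax[p] and \textsc{Hausdorff}$_p$ are decision problems about the \emph{$p$-th powers} of the relevant quantities, so no irrational number ever needs to be computed. Taking $Q:=\{x\in\R^d:\pm e_i^Tx\le 0\ \forall i\in[d]\}=\conv\{0\}=\{0\}$ gives a polytope that is simultaneously in $\CH$- and in $\CV$-presentation of constant size, is $0$-symmetric, and satisfies $Q\subseteq P$ (since $P$ is $0$-symmetric, hence $0\in P$); therefore $\delta_p(P,Q)^p=\max_{x\in P}d_p(x,\{0\})^p=\max_{x\in P}\|x\|_p^p$ \emph{exactly}, the threshold $\gamma$ is carried over unchanged, and the reduction (which also keeps $d$ fixed, hence is an FPT reduction) is complete in one line, for both target problems and for $p=\infty$ as well. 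This is the paper's argument.

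Your cube-based route, by contrast, leaves its central step unproven and, as sketched, would fail. The correction $d_p(x,\eta\B_\infty)^p=\sum_i\max\{|x_i|-\eta,0\}^p$ is not a function of $\|x\|_p^p$, so there is no exact translation of the threshold; you defer the ``gap argument'' as ``the real work'' without carrying it out. Moreover, the yes/no gap of \normmax[p] can be as small as $2^{-\mathrm{poly}(L)}$ in the input coding length $L$, so a perturbation of size $\eta=1/\mathrm{poly}$ can flip the threshold; the fix would require $\eta$ exponentially small, which — contrary to your claim — is perfectly admissible (a rational like $2^{-L^2}$ has polynomial bit-length), but then the choice of $\rho$ and the argument that the maximizer ordering is preserved must be made quantitative via coding-length bounds, none of which appears in the proposal. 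The preliminary normalization $\B_\infty^d\subseteq P$ also silently assumes full-dimensionality of $P$, which is not part of the \normmax[p] instance and would need separate (if routine) handling. In short, the approach could probably be repaired, but as written the decisive quantitative step is missing, while the intended proof avoids all of it by choosing $Q=\{0\}$.
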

\begin{proof}
If $(d, P, \lambda)$ is an instance of \textsc{Normmax}$_p$  with an $\CH$-presented rational polytope $P\subseteq \R^d$, let $Q:= \{x \in \R^d: \pm e_i^T x \leq 0 ~\forall i\in [d]\} = \conv\{0\} = \{0\}$. Then,
$$\max \{\|x\|_p^p : x \in P\} \geq \lambda \Leftrightarrow \delta_p(P,Q)^p \geq \lambda$$
\end{proof}

\begin{theo}[Hardness of \textsc{Hausdorff}$_p$-$\CH$-$\CH$ and \textsc{Hausdorff}$_p$-$\CV$-$\CH$]
For $p \in \N$, \textsc{Hausdorff}$_p$-$\CH$-$\CH$ is W[1]-hard. For $p\in \N\setminus\{1\}$, \textsc{Hausdorff}$_p$-$\CV$-$\CH$ is W[1]-hard. For $p \in \N$, \textsc{Hausdorff}$_p$-$\CH$-$\CH$ and \textsc{Hausdorff}$_p$-$\CV$-$\CH$ are $\NP$-hard. All statements hold true even if all polytopes are restricted to be 0-symmetric.
  
\label{theo:HausdorffHVhard}
\label{theo:HausdorffHHard}
\end{theo}
\begin{proof}
By Lemma~\ref{lem:NormmaxToHausdorff} and the W[1]-hardness of \normmax[p] for $p \in \N\setminus\{1\}$ from \cite{normmaxW1}, \textsc{Hausdorff}$_p$-$\CH$-$\CH$ and \textsc{Hausdorff}$_p$-$\CV$-$\CH$ are W[1]-hard for $p\in \N\setminus\{1\}$. W[1]-hardness of \textsc{Hausdorff}$_1$-$\CH$-$\CH$ follows from Lemma~\ref{lem:hausdorff1HW1}.
Since \normmax[p] is $\NP$-hard for all $p \in \N$ (e.g.~\cite{bgkl-90, ms-86}), the same reduction can also be used to show $\NP$-hardness for all $p \in \N$. 
\end{proof}

\paragraph{} We point out that the complexity of \textsc{Hausdorff}$_\infty$-$\CH$-$\CH$ and \textsc{Hausdorff}$_\infty$-$\CV$-$\CH$ is left open by Theorem~\ref{theo:HausdorffHHard}. For the first problem, we give a partial answer in Corollary~\ref{cor:HHpolytopal}. The latter one has an interesting connection to another decision problem (Problem~\ref{prb:vertexEnumeration}) whose complexity is still unknown.

\paragraph{} As a corollary of Lemma~\ref{lem:hausdorff1HW1}, we obtain the hardness of computing the Hausdorff distance of two $\CH$-polytopes in an arbitrary polytopal norm which is part of the input. Hence, in contrast to the results about norm maximization in \cite{normmaxW1}, the approximation of the unit ball of a $p$-norm by a polytope cannot be used as a polynomial time approximation algorithm for the Hausdorff distance of two $\CH$-presented polytopes.

\begin{cor}[{W[1]-hardness for polytopal norms}]
For 0-symmetric $\B\in \CC_0^d$ and $P,Q \in \CC^d$, let $\delta_\B(P,Q)$ denote the Hausdorff distance induced by the norm with unit ball $\B$. Then the problems

\medskip
\begin{tabular}{ll}
\textbf{Input:}		&  $d\in \N$, $\rho \in \Q$, $P,Q\subseteq \R^d$ rational polytopes in $\CH$-presentation,\\ 
& polytopal unit ball $\B \subseteq \R^d$ in $\CH$-presentation ($\CV$-presentation, resp.)\\
\textbf{Parameter:} & $d$\\
\textbf{Question:} 	& Is $\delta_\B \left( P, Q\right) \geq \rho $?\\
\end{tabular}
\medskip

are both W[1]-hard.
\label{cor:HHpolytopal}
\end{cor}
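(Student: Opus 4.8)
The plan is to reduce from the problems whose hardness is already established in Lemma~\ref{lem:hausdorff1HW1}, namely \textsc{Hausdorff}$_1$-$\CH$-$\CH$ restricted to $0$-symmetric polytopes. The key observation is that the norm $\|\cdot\|_1$ on $\R^d$ is itself a polytopal norm: its unit ball $\B_1$ is the cross-polytope $\conv\{\pm e_1,\dots,\pm e_d\}$, which is a $0$-symmetric polytope admitting both a $\CV$-presentation (as just written) and an $\CH$-presentation (the $2^d$ inequalities $\sum_i \pm x_i \le 1$). The latter has exponential size, which is the one point requiring a little care; see below.

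For the version where $\B$ is given in $\CV$-presentation, the reduction is immediate: given an instance $(d,\rho,P,Q)$ of \textsc{Hausdorff}$_1$-$\CH$-$\CH$, output the same $d$, $\rho$, $P$, $Q$ together with $\B := \conv\{\pm e_1,\dots,\pm e_d\}$ in $\CV$-presentation. All of $P$, $Q$, $\B$ are $0$-symmetric rational polytopes, the dimension $d$ is preserved (so this is an FPT reduction with respect to the parameter $d$), the output has polynomial size, and by construction $\delta_\B(P,Q) = \delta_1(P,Q)$, so the YES-instances correspond. This establishes W[1]-hardness of the $\CV$-version.

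For the version where $\B$ is given in $\CH$-presentation, the naive choice $\B = \B_1$ fails because the $\CH$-presentation of the cross-polytope has $2^d$ facets, which is not polynomial in the input size. The fix is to use a different polytopal norm whose unit ball has a polynomial-size $\CH$-presentation but which still agrees with $\|\cdot\|_1$ on the relevant computation. Here I would invoke the structure of the polytopes $P,Q$ built in the proof of Lemma~\ref{lem:hausdorff1HW1}: they live in $\R^{2k}$ viewed as $k$ copies of $\R^2$, and the Hausdorff distance computation there only ever compares support functions in directions $u$ whose two-dimensional blocks lie on $\bd(\B_\infty^2)$; the $\ell_1$-unit ball enters only through $\B_1^\circ = \B_\infty$. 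So rather than reducing from the abstract problem, I would re-run the reduction of Lemma~\ref{lem:hausdorff1HW1} from \textsc{Clique}, but replace the ambient norm by the polytopal norm whose unit ball is $\B := \{x \in \R^{2k} : \|x_i\|_1 \le 1\ \forall i \in [k]\}$ — equivalently the $\ell_\infty$-sum of $k$ planar $\ell_1$-balls. This $\B$ is $0$-symmetric, has a polynomial-size $\CH$-presentation ($4k$ facets, since each planar $\ell_1$-ball needs only $4$ inequalities), and its polar is $\B^\circ = \{u : \|u_i\|_\infty \le 1\ \forall i\} = \B_\infty^{2k}$, which is exactly the body over which the maximum in Lemma~\ref{lem:hausdorff3} is taken in that proof. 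Since Lemma~\ref{lem:hausdorffDirectProd} and the whole argument of Lemma~\ref{lem:hausdorff1HW1} only used that the per-block norm was $\ell_1$ on $\R^2$ and that $\delta$ decomposes as an $\ell_1$-sum over blocks — and an $\ell_\infty$-sum of $\ell_1$-balls gives precisely such a decomposition of the Hausdorff distance — the identity $\delta_\B(P,Q) = k\varepsilon \iff G$ has a $k$-clique carries over verbatim.

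The main obstacle is exactly this last point: one must check that the proof of Lemma~\ref{lem:hausdorff1HW1} goes through with the block-$\ell_1$ norm in place of the global $\ell_1$ norm. This amounts to verifying (i) an analogue of Lemma~\ref{lem:hausdorffDirectProd} for the $\ell_\infty$-sum of blockwise Hausdorff distances — which is even simpler, since $\delta_\B(P,Q) = \max_i \delta_{\ell_1,i}(P_i,Q_i)$ when $\B$ is the corresponding $\ell_\infty$-sum — and then (ii) that the clique-encoding inequalities $E^{ij}_{vw}, F^{ij}_{vw}$ still force the witness direction $u^*$ to have distinct, edge-respecting blocks. Point (ii) is unchanged because those inequalities and the sharpness condition \eqref{eq:hausdorffR2} are purely two-dimensional statements about the planar polytopes $P_1, Q_1$, which are untouched. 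Alternatively, and perhaps more cleanly, one can first prove a small lemma: \emph{if $\B = \B' \times \dots \times \B'$-style $\ell_\infty$-sum and $P \subseteq P'$ componentwise-ish, the Hausdorff distance is the max of the component Hausdorff distances}, and then note that scaling turns this max into the sum appearing in Lemma~\ref{lem:hausdorff1HW1} after replacing $k\varepsilon$ by $\varepsilon$ throughout; either way the reduction from \textsc{Clique} is preserved and W[1]-hardness follows. Finally, one remarks that all polytopes and $\B$ produced are $0$-symmetric, giving the last sentence of the corollary for free.
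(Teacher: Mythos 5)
Your reduction for the $\CV$-presented ball is exactly the paper's argument and is fine. The gap is in the $\CH$-presented case, and it starts with the premise that the $2^d$ facets of $\B_1$ disqualify the naive reduction. The corollary claims W[1]-hardness, so all you need is a parameterized (FPT) reduction: running time $f(d)\cdot\mathrm{poly}$ with the new parameter bounded in terms of the old one. The parameter here is the dimension $d$, so writing down all $2^d$ facet inequalities of the cross-polytope takes time $O(2^d d)$, which is FPT in $d$, and it leaves $P$, $Q$, $\rho$ and the parameter untouched. This is precisely the paper's one-line proof: an $\CH$- and a $\CV$-presentation of $\B_1$ can be constructed in FPT-time, so \textsc{Hausdorff}$_1$-$\CH$-$\CH$ reduces to both variants. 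Only a transfer of $\NP$-hardness would be blocked by the exponential-size presentation, and the corollary does not claim that. You have conflated polynomial many-one reductions with FPT reductions.

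The workaround you build instead does not work. First, the polar of $\B=\{x\in\R^{2k}:\|x_i\|_1\le 1~\forall i\in[k]\}$ is not $\B_\infty^{2k}$ but $\{u\in\R^{2k}:\sum_{i=1}^k\|u_i\|_\infty\le 1\}$: the dual norm of $\max_i\|x_i\|_1$ is $\sum_i\|u_i\|_\infty$. Second, and fatally, under this block-max norm the Hausdorff distance decomposes as $\max_i$ of the blockwise distances, as you note yourself, whereas the reduction in Lemma~\ref{lem:hausdorff1HW1} hinges on the \emph{sum} decomposition of Lemma~\ref{lem:hausdorffDirectProd}: the threshold $k\varepsilon$ forces a single point $x^*\in P$ to be extremal in all $k$ blocks simultaneously, and it is exactly this simultaneity that the constraints $E^{ij}_{vw},F^{ij}_{vw}$ convert into the clique condition. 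With the max decomposition that simultaneity disappears: a point with one block on a facet $c_v^Tx_i=\gamma_v$ and all other blocks at the origin lies in $P$ (the pair constraints are slack there) and already gives $\delta_\B(P,Q)\geq\varepsilon$, while $\delta_\B(P,Q)\leq\varepsilon$ holds trivially because blockwise every $x_i\in P_1$ is within $\varepsilon$ of $Q_1$ and $Q\subseteq P$. Hence $\delta_\B(P,Q)=\varepsilon$ for every input graph, so the reduction decides nothing; the remark that \enquote{scaling turns this max into the sum} does not repair this, since no rescaling converts a blockwise maximum into the required blockwise sum. For the $\CH$-presented-ball variant, drop the workaround and simply output the $\CH$-presentation of $\B_1$ in FPT time, as the paper does.
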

\begin{proof}
Since we can construct an $\CH$- and $\CV$-presentation of $\B_1$ in FPT-time, we can reduce  \textsc{Hausdorff}$_1$-$\CH$-$\CH$ to either of the above problems.
\end{proof}

\paragraph{} Regardless of the norm used to measure the Hausdorff distance, \textsc{Hausdorff}-$\CV$-$\CH$ is closely linked to (and at least as hard as) the following vertex enumeration problem.

\begin{prb}[\textsc{VertexEnumeration}]
\begin{tabular}{ll}
\textbf{Input:}		&  $d\in \N, n,m\in \N , a_1, \dots, a_m \in \Q^d, \beta_1,\dots, \beta_m\in\Q,  p_1,\dots, p_n \in \Q^d$ \\
\textbf{Question:} 	&  Is $\ext\{x\in \R^d: a_i^Tx \leq \beta_i ~\forall i \in [m] \} \setminus \{p_1,\dots,p_n\} \neq \emptyset$?\\
\end{tabular}
\label{prb:vertexEnumeration}
\end{prb}

\paragraph{} Clearly, if $Q:= \{x \in \R^d: a_i^T x \leq \beta_i ~\forall i \in [m]\}$ is bounded and $P:= \conv\{p_1,\dots, p_n\}$, then the question in \textsc{VertexEnumeration} is equivalent to the question, whether $\delta(P,Q)> 0$ in any norm. 
In \cite{khachiyanVertexEnumeration}, \textsc{VertexEnumeration} is shown to be $\NP$-hard for a general (possibly unbounded) polyhedron $Q$.  The question for the complexity of the same decision problem where $Q$ is required to be bounded is open \cite{khachiyanVertexEnumeration} and we refer to \cite{hansDiss} for a recent account on this topic and related results.

\section{The Matching Problem}
\label{sec:hausdorffMatching}
\paragraph{} The remaining part of the paper is concerned with the Hausdorff matching problem under homothetics for convex bodies. 

 In order to avoid degeneracies, we assume both bodies to have an interior point (the origin without loss of generality). Since it is probably the case of most interest, we state the problem and the results only for the Euclidean norm, which also offers some notational convenience.  Arbitrary norms can be handled with the same ideas at the expense of a slightly bulkier notation.

\begin{prb}[Hausdorff matching under homothetics]
For $P, Q\in \CC_0^d$, find a scaled translate of $P$ such that its Hausdorff distance to $Q$ is minimized.\index{Hausdorff matching} In other words, solve the problem\index{homothetic transformation}

\begin{equation} 
\begin{array}{rlll}
\delta_H(P,Q):=	&\min & \delta_2(\alpha P + c, Q) \\
				&s.t. & c \in \R^d \\
				&	& \alpha > 0
\end{array}
\label{eq:hausdorffMatchingGeneral}
\end{equation}
with $\delta_H(P,Q)$ being the optimal value of the optimization problem in \eqref{eq:hausdorffMatchingGeneral}. Here, the subscript \enquote{$H$} indicates that the Hausdorff distance (in Euclidean norm) of $P$ and $Q$ is measured \enquote{up to homothetics}. Moreover, we say that $P$ is in \emph{optimal homothetic position}\index{optimal homothetic position} with respect to $Q$ if 
\begin{equation}
\delta_H(P, Q)= \delta(P,Q) .
\end{equation}
\label{prb:HMHomo}
\end{prb}

\subsection{Optimality Criterion}
\label{sec:optCondHausdorff}
\paragraph{} The goal of this subsection is to give an optimality criterion for Problem~\ref{prb:HMHomo} and to characterize when $P$ is in optimal homothetic position with respect to $Q$ in the spirit of John's Theorem \cite{john}.
 
\paragraph{} In preparation of Theorem~\ref{theo:optCondHausdorffHomo}, we first prove a series of technical lemmas that provide required details such as normal cones of outer parallel bodies, several derivatives and a statement about subgradients of certain convex functions. 
We start by recalling that, for $P \in \CC^d$ and $x \in \R^d$, there is a unique point $\Pi_P(x) \in P$ such that 
$$d_2(x, P) = \|x - \Pi_P(x)\|_2.$$
Thus, the nearest-point-mapping $ \Pi_P: \R^d \mapsto P;~ x \mapsto \Pi_P(x)$ is well-defined.

\begin{lem}[Normal cone of outer parallel bodies] \label{lem:outerParallel}%
Let $P \in \CC^d$ and $\rho > 0$. For $x \in \bd(P+\rho \B_2)$, the normal cone of $P+\rho \B_2$ in $x$ is given by  
$$N(P+ \rho \B_2, x) =\pos\bigl\{ x - \Pi_P(x)\bigr\}.$$
If further $\rho' >0$, then 
$$ P+ \rho' \B_2 = \bigcap_{x \in \bd(P+\rho \B_2)}H_\leq\Bigl(x-\Pi_P(x),~ (x-\Pi_P(x))^T \Pi_P(x) + \rho\rho'\Bigr).$$
\end{lem}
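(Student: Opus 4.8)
The plan is to establish the two claims separately, the first by a direct geometric argument and the second by a mutual-inclusion argument that uses the first.

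For the normal cone statement, I would fix $x \in \bd(P+\rho\B_2)$ and set $p := \Pi_P(x)$, so that $\|x-p\|_2 = \rho$ since $x$ lies on the boundary. First I would check that $x - p \in N(P+\rho\B_2, x)$: for any $y \in P + \rho\B_2$, write $y = q + \rho b$ with $q \in P$, $\|b\|_2 \le 1$, and estimate $(x-p)^T(y - x) = (x-p)^T(q - p) + \rho(x-p)^Tb - (x-p)^T(x-p)$. The first term is $\le 0$ because $-(x-p) \in N(P,p)$ (the standard variational characterization of the metric projection onto the convex body $P$), the second is $\le \rho\|x-p\|_2 = \rho^2$ by Cauchy--Schwarz, and the third equals $-\rho^2$; hence the sum is $\le 0$, which is exactly the statement that $x-p$ is an outer normal at $x$. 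This gives $\pos\{x - \Pi_P(x)\} \subseteq N(P+\rho\B_2, x)$. For the reverse inclusion I would argue that the normal cone at $x$ is one-dimensional: the ball $x + \rho\B_2$ is contained in $P + \rho\B_2$ (since $x - p + p + \rho\B_2 \subseteq P + \rho\B_2$ because translating the ball centered at $p$ by $x-p$, of length $\rho$, still lands inside the outer parallel body — more carefully, $p + \rho\B_2 \subseteq P + \rho\B_2$ and for any unit $v$, $x + \rho v$ has distance $\le \rho$ from the segment... ) and touches its boundary at $x$, so any supporting hyperplane of $P+\rho\B_2$ at $x$ also supports the ball $x+\rho\B_2$ at $x$, forcing its outer normal to be the unique direction $x-p$. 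Thus $N(P+\rho\B_2,x) = \pos\{x - \Pi_P(x)\}$.

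For the representation of $P + \rho'\B_2$, I would use the fact that any closed convex body equals the intersection of all its supporting halfspaces, and more specifically that it suffices to take, for each boundary point $z$, one supporting halfspace with an outer normal from $N(P+\rho'\B_2, z)$. Parametrize the boundary of $P + \rho'\B_2$ by $z = \Pi_P(z) + \rho'\,\frac{x - \Pi_P(x)}{\rho}$ where $x$ ranges over $\bd(P+\rho\B_2)$ and $\Pi_P(z) = \Pi_P(x)$ (the nearest point is unchanged since we moved radially outward along the same normal ray). By the first part, an outer normal of $P+\rho'\B_2$ at $z$ is $a := x - \Pi_P(x)$, and the supporting halfspace is $H_\le(a, a^Tz)$. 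Computing $a^Tz = a^T\Pi_P(x) + \frac{\rho'}{\rho} a^T(x - \Pi_P(x)) = a^T\Pi_P(x) + \frac{\rho'}{\rho}\|a\|_2^2 = a^T\Pi_P(x) + \rho'\rho$ (using $\|a\|_2 = \|x - \Pi_P(x)\|_2 = \rho$), which matches the stated right-hand side. One inclusion ($\subseteq$) is then immediate because $z \in H_\le(a, a^Tz)$ shows each listed halfspace contains $z$, hence all of them; actually more directly, $P + \rho'\B_2$ is contained in each such supporting halfspace. For the reverse inclusion, I would note that as $x$ ranges over all of $\bd(P+\rho\B_2)$, the normals $a = x - \Pi_P(x)$ range over all unit multiples of outer normals of $P$ (together with the normals of $P$ itself, scaled), so the listed halfspaces include a supporting halfspace at every boundary point of $P+\rho'\B_2$; by the standard representation of a convex body as the intersection of its supporting halfspaces, the intersection is exactly $P+\rho'\B_2$.

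The main obstacle I anticipate is the reverse inclusion in the normal-cone claim — i.e.\ ruling out that the normal cone at $x$ is larger than one-dimensional — and, relatedly, verifying cleanly that every boundary point of $P+\rho'\B_2$ is hit by the radial reparametrization so that the halfspace family in the second claim is genuinely exhaustive. Both hinge on the elementary but slightly fiddly fact that $x + \rho\B_2 \subseteq P + \rho\B_2$ with contact exactly at $x$; once that is in hand, the rest is bookkeeping with Cauchy--Schwarz and the projection inequality.
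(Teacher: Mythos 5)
Your first inclusion ($\pos\{x-\Pi_P(x)\}\subseteq N(P+\rho \B_2,x)$ via the projection inequality and Cauchy--Schwarz) is correct, and your handling of the second claim (radial reparametrization of $\bd(P+\rho'\B_2)$ by $\bd(P+\rho\B_2)$, with the computation $a^Tz=a^T\Pi_P(x)+\rho\rho'$) is exactly the paper's. The genuine gap is the step you yourself flag as the main obstacle: the reverse inclusion for the normal cone rests on the claim that $x+\rho\B_2\subseteq P+\rho\B_2$ \enquote{with contact exactly at $x$}, and this claim is false. Since $d_2(x,P)=\rho$, the ball centered at $x$ contains points at distance up to $2\rho$ from $P$ (take $P=\{0\}$, $\rho=1$, $x=e_1$: then $2e_1\in x+\rho\B_2$ but $2e_1\notin P+\rho\B_2$); moreover, if that containment did hold it would make $x$ an interior point of $P+\rho\B_2$, contradicting $x\in\bd(P+\rho\B_2)$, and in any case $x$ is the \emph{center} of $x+\rho\B_2$, not a boundary point of it, so \enquote{supports the ball at $x$} is meaningless for that ball.

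The repair is the ball you mention in passing and then abandon: $\Pi_P(x)+\rho\B_2$. It is contained in $P+\rho\B_2$, and $x$ lies on its boundary because $\|x-\Pi_P(x)\|_2=\rho$. Hence any supporting halfspace of $P+\rho\B_2$ at $x$ also contains $\Pi_P(x)+\rho\B_2$ and supports it at $x$, and the outer normal of a Euclidean ball at a boundary point is unique up to positive scaling, namely the direction from the center to that point, i.e.\ $x-\Pi_P(x)$. This yields $N(P+\rho\B_2,x)\subseteq\pos\{x-\Pi_P(x)\}$ and is in substance the paper's argument, which phrases it via tangent cones: $T(\Pi_P(x)+\rho\B_2,x)=H_\leq(x-\Pi_P(x),0)\subseteq T(P+\rho\B_2,x)$, the latter is at most a halfspace, so the two coincide and polarity gives the normal cone. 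With this correction your exhaustiveness worry in the second part also disappears: $x\mapsto\Pi_P(x)+\frac{\rho'}{\rho}\bigl(x-\Pi_P(x)\bigr)$ is a bijection between $\bd(P+\rho\B_2)$ and $\bd(P+\rho'\B_2)$ (its inverse is the same map with $\rho$ and $\rho'$ interchanged, since the nearest point in $P$ is unchanged along the normal ray), so the listed halfspaces are precisely the supporting halfspaces of $P+\rho'\B_2$ supplied by the first part, and the intersection formula follows.
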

\begin{proof}
Since $\Pi_P(x) + \rho \B_2 \subseteq P +\rho \B_2$, we have $H_\leq( x- \Pi_P(x), 0) =T(\Pi_P(x) + \rho \B_2,x) \subseteq T(P+\rho \B_2, x)$.
Since $P+ \rho \B_2$ is convex and the tangential cone $T(P+ \rho \B_2,x)$ is thus at most a half-space, we conclude that the above inclusion is actually an equality and that  $N(P +\rho \B_2, x)= T(\Pi_P(x) + \rho \B_2, x)^\circ = \pos\{x - \Pi_P(x)\}.$

For $\rho'>0$, the above yields
\begin{align*} 
P+ \rho' \B_2 &=&& \bigcap_{x' \in \bd(P+\rho'\B_2)} H_\leq\Bigl(x'-\Pi_P(x'), \bigl(x'-\Pi_P(x')\bigr)^Tx'\Bigr)\\
&=&&\bigcap_{x \in \bd(P+\rho\B_2)} H_\leq\Bigl(x-\Pi_P(x), \bigl(x-\Pi_P(x)\bigr)^T\left(\Pi_P(x) + \frac{\rho'}{\rho}(x-\Pi_P(x))\right)\Bigr)\\
&=&&\bigcap_{x \in \bd(P+\rho\B_2)} H_\leq\Bigl(x-\Pi_P(x),\bigl(x-\Pi_P(x)\bigr)^T \Pi_P(x) + \rho\rho' \Bigr).
\end{align*}
\end{proof}

\begin{lem}[Differentiating $d_2(\cdot, P)$]\label{lem:nablaGp}%
Let $P \in \CC^d$ and define $ g_P: \R^d \rightarrow [0, \infty); x \mapsto d_2(x, P)$. 
The mapping $g_P$ is convex and, for $x \in \R^d \setminus P$, it is continuously differentiable in $x$ with
$$\nabla g_P(x) = \frac{x-\Pi_P(x)}{\|x-\Pi_P(x)\|_2}.$$
\end{lem}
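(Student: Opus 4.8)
The convexity of $g_P$ is just Remark~\ref{rem:distanceIsConvex} applied to the Euclidean norm, so the only real content is the differentiability claim on $\R^d \setminus P$. The plan is to fix $x \notin P$, write $y := \Pi_P(x)$, and show first that $g_P$ has a unique subgradient at $x$, namely $v := (x-y)/\|x-y\|_2$; since $g_P$ is convex and finite everywhere, a unique subgradient at a point forces differentiability there with $\nabla g_P(x) = v$, and a convex function that is differentiable on the open set $\R^d \setminus P$ is automatically $C^1$ there. So everything reduces to the identity $\partial g_P(x) = \{v\}$.

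\textbf{Showing $v \in \partial g_P(x)$.} Here I would use the sub-gradient inequality directly: for any $z \in \R^d$ I must check $g_P(z) \ge g_P(x) + v^T(z-x)$. Since $\|x-y\|_2 = g_P(x) = d_2(x,P)$ and $v^T(x-y) = \|x-y\|_2$, the right-hand side equals $v^T(z - y)$. On the other hand, $y = \Pi_P(x)$ means $v = (x-y)/\|x-y\|_2$ is the outer Euclidean unit normal to $P$ at $y$, i.e. $v \in N(P,y)$ and hence $v^T w \le v^T y$ for all $w \in P$; applying this with $w = \Pi_P(z)$ gives $v^T(z-y) \le v^T(z - \Pi_P(z)) \le \|z - \Pi_P(z)\|_2 = g_P(z)$ by Cauchy--Schwarz and $\|v\|_2 = 1$. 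Chaining the inequalities yields exactly the subgradient inequality.

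\textbf{Showing $v$ is the only subgradient.} Suppose $u \in \partial g_P(x)$. Since $g_P$ is $1$-Lipschitz in the Euclidean norm, $\|u\|_2 \le 1$. Testing the subgradient inequality at $z = y = \Pi_P(x) \in P$ gives $0 = g_P(y) \ge g_P(x) + u^T(y-x) = \|x-y\|_2 + u^T(y-x)$, i.e. $u^T(x-y) \ge \|x-y\|_2$. Combined with $\|u\|_2 \le 1$ and Cauchy--Schwarz, $\|x-y\|_2 \le u^T(x-y) \le \|u\|_2\|x-y\|_2 \le \|x-y\|_2$, so all inequalities are equalities; equality in Cauchy--Schwarz together with $\|u\|_2 = 1$ and $x \ne y$ forces $u = (x-y)/\|x-y\|_2 = v$. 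Hence $\partial g_P(x) = \{v\}$, which proves differentiability with the stated gradient and, as noted, $C^1$-regularity on all of $\R^d \setminus P$.

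\textbf{Main obstacle.} There is no deep obstacle; the one point to get right is the justification that a uniquely-subdifferentiable point of a finite convex function is a point of differentiability — this is the standard fact that $\partial g_P(x)$ is a singleton iff $g_P$ is differentiable at $x$ — and that differentiability on an open set plus convexity upgrades to continuous differentiability. Alternatively, one can bypass the subgradient machinery entirely and estimate $g_P(x+h) - g_P(x)$ directly: $g_P(x+h) \le \|x + h - y\|_2 = \|x-y\|_2 + v^T h + O(\|h\|_2^2)$ gives the upper bound, and the convexity-based lower bound $g_P(x+h) \ge g_P(x) + v^T h$ from the previous paragraph pins down the derivative; continuity of $\nabla g_P$ then follows from continuity of $\Pi_P$ on $\R^d \setminus P$. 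Either route is routine.
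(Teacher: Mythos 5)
Your proof is correct, but it reaches the key fact $\partial g_P(x)=\{v\}$ by a genuinely different route than the paper. The paper identifies the subdifferential indirectly: it invokes Rockafellar's Theorem 23.7 to get $\pos(\partial g_P(x)) = N\bigl(P + d_2(x,P)\B_2,\, x\bigr)$, uses Lemma~\ref{lem:outerParallel} (normal cones of outer parallel bodies) to see that this cone is the ray spanned by $x-\Pi_P(x)$, and then pins down the scaling $\lambda = \|x-\Pi_P(x)\|_2^{-1}$ from the behaviour of the distance function along that ray. You instead compute $\partial g_P(x)$ directly from the subgradient inequality: membership of $v=(x-\Pi_P(x))/\|x-\Pi_P(x)\|_2$ follows from the variational inequality of the nearest-point projection ($v\in N(P,\Pi_P(x))$) together with Cauchy--Schwarz, and uniqueness follows from $1$-Lipschitzness of the distance function plus the equality case of Cauchy--Schwarz. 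Your argument is more elementary and self-contained, needing neither Lemma~\ref{lem:outerParallel} nor Theorem 23.7; the paper's route costs little extra because Lemma~\ref{lem:outerParallel} is needed anyway for Theorem~\ref{theo:optCondHausdorffHomo}. Both arguments finish identically, via the standard facts (Rockafellar, Theorem 25.1 and Corollary 25.5.1) that a singleton subdifferential of a finite convex function gives differentiability with that gradient and that a convex function differentiable on an open set is continuously differentiable there; cite these explicitly (or carry out your alternative two-sided estimate with continuity of $\Pi_P$), since that upgrade is the only general fact your proof leans on.
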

\begin{proof}
Convexity of $g_P$ was already observed in Remark~\ref{rem:distanceIsConvex}. Since, for $x \in \R^d\setminus P$, the function $g_P$ does not attain its minimum in $x$, Theorem 23.7 from \cite{rockafellar} asserts that
$\pos(\partial g_P(x))= N(P + d_2(x,P)\B_2, x).$
 Hence, by Lemma~\ref{lem:outerParallel}, $\partial g(x) = \{\lambda (x - \Pi_P(x))\}$ for some $\lambda >0$. Using the linearity of  $\|\cdot\|_2$ on rays emanating from the origin, we obtain $\lambda = \|x-\Pi_P(x)\|_2^{-1}$.
As $|\partial g(x)|=1$ for all $x\in \R^n \setminus P$, Theorem 25.1 and Corollary 25.5.1 in \cite{rockafellar} yield that $g$ is continuously differentiable in $x$ and that $\nabla g_P(x) = {\|x-\Pi_P(x)\|_2}^{-1}(x-\Pi_P(x))$.
\end{proof}

\begin{lem}[Differentiating the objective function]\label{lem:fpfq}%
Let $P,Q \in \CC^d$ and define, for $p\in P$, $ f_p: (0,\infty) \times \R^d \rightarrow [0, \infty); (\alpha, c)\mapsto d_2(\alpha p + c, Q)$
and, for $q \in Q$, $ f_q: (0,\infty) \times \R^d \rightarrow [0, \infty); (\alpha, c)\mapsto d_2(q, \alpha P +c).$
Then, $f_p$ and $f_q$ are convex.
Further, for $p \in \R^d \setminus Q$,
\begin{equation}
\nabla f_p(1,0)= \frac{1}{\|p-\Pi_Q(p)\|_2}\vector{(p-\Pi_Q(p))^T p \\ p-\Pi_Q(p) }
\label{eq:nablaFp}
\end{equation}
and for  $q \in \R^d \setminus P$,
\begin{equation}
\nabla f_q(1,0)= -\frac{1}{\|q-\Pi_P(q)\|_2}\vector{(q-\Pi_P(q))^T \Pi_P(q) \\ q-\Pi_P(q) }
\label{eq:nablaFq}
\end{equation}
\end{lem}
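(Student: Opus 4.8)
The plan is to establish convexity first and then compute the two gradients by reducing everything to the single-variable distance function $g_Q$ (resp.\ $g_P$) studied in Lemma~\ref{lem:nablaGp}, composed with an affine map, so that the chain rule does all the work.

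\textbf{Convexity.} For $f_p$, note that $(\alpha, c) \mapsto \alpha p + c$ is an affine map from $(0,\infty)\times\R^d$ into $\R^d$, and $d_2(\cdot, Q) = g_Q$ is convex by Remark~\ref{rem:distanceIsConvex} (or Lemma~\ref{lem:nablaGp}); a convex function precomposed with an affine map is convex, so $f_p = g_Q \circ (\alpha,c)\mapsto(\alpha p + c)$ is convex. For $f_q(\alpha, c) = d_2(q, \alpha P + c)$ the reduction is slightly less immediate because it is the \emph{set} that moves; here I would use the elementary identity $d_2(q, \alpha P + c) = d_2\bigl(\tfrac{q - c}{\alpha}, P\bigr)\cdot\alpha$ for $\alpha > 0$, i.e.\ $f_q(\alpha,c) = \alpha\, g_P\!\left(\tfrac{1}{\alpha}(q-c)\right)$. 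This is a perspective-type transformation of the convex function $g_P$, and the perspective $(\alpha, y)\mapsto \alpha g_P(y/\alpha)$ of a convex function is convex on $(0,\infty)\times\R^d$; composing further with the affine map $(\alpha,c)\mapsto(\alpha, q-c)$ preserves convexity. Alternatively, one can argue convexity directly from Definition~\ref{defi:hausdorff}-style reasoning, but the perspective-function route is cleanest.

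\textbf{The gradients.} For \eqref{eq:nablaFp}: since $p \notin Q$, the point $(1,0)$ maps to $p \in \R^d\setminus Q$, where $g_Q$ is differentiable by Lemma~\ref{lem:nablaGp} with $\nabla g_Q(p) = (p - \Pi_Q(p))/\|p-\Pi_Q(p)\|_2$. Writing $A(\alpha,c) = \alpha p + c$, the chain rule gives $\nabla f_p(1,0) = (\mathrm{D}A)^T \nabla g_Q(p)$, where the Jacobian $\mathrm{D}A$ has first column $p$ (the $\partial/\partial\alpha$ direction) and remaining block the identity $I_d$ (the $\partial/\partial c$ directions); hence $\nabla f_p(1,0) = \bigl((p-\Pi_Q(p))^T p,\ (p - \Pi_Q(p))^T\bigr)^T / \|p - \Pi_Q(p)\|_2$, which is \eqref{eq:nablaFp}. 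For \eqref{eq:nablaFq}: using $f_q(\alpha, c) = \alpha\, g_P\!\left(\tfrac{1}{\alpha}(q - c)\right)$ and differentiating at $(\alpha, c) = (1,0)$, where the inner argument is $q \in \R^d\setminus P$ so $g_P$ is differentiable, the product and chain rules give, with $w := \nabla g_P(q) = (q - \Pi_P(q))/\|q - \Pi_P(q)\|_2$:
\begin{align*}
\frac{\partial f_q}{\partial \alpha}(1,0) &= g_P(q) + w^T\!\left(-\,q\right) = g_P(q) - w^T q,\\
\nabla_c f_q(1,0) &= -\,w.
\end{align*}
Since $g_P(q) = \|q - \Pi_P(q)\|_2 = w^T(q - \Pi_P(q))$, the $\alpha$-component simplifies to $g_P(q) - w^T q = -\,w^T \Pi_P(q) = -(q-\Pi_P(q))^T\Pi_P(q)/\|q-\Pi_P(q)\|_2$, and assembling the components yields exactly \eqref{eq:nablaFq}.

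\textbf{Main obstacle.} The only genuinely delicate point is handling $f_q$, where the scaling acts on the set rather than the point: one must justify the rescaling identity $d_2(q, \alpha P + c) = \alpha\, d_2\bigl((q-c)/\alpha, P\bigr)$ (immediate from positive homogeneity of $\|\cdot\|_2$ and $\Pi_{\alpha P + c}(q) = \alpha\,\Pi_P((q-c)/\alpha) + c$), and then correctly differentiate the resulting perspective expression — in particular getting the $\partial/\partial\alpha$ term right, since a naive computation easily drops the $g_P(q)$ term or mishandles the $-1/\alpha^2$ from differentiating $1/\alpha$. Once the identity and the simplification $g_P(q) = w^T(q - \Pi_P(q))$ are in hand, everything else is routine calculus, and convexity of both functions follows from standard operations (affine precomposition, perspective) preserving convexity.
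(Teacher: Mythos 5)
Your proof is correct and follows essentially the same route as the paper: convexity of $f_p$ via affine precomposition, the rescaling identity $f_q(\alpha,c)=\alpha\, g_P\bigl(\tfrac{1}{\alpha}(q-c)\bigr)$, and the chain rule with Lemma~\ref{lem:nablaGp}, including the same simplification $g_P(q)-w^Tq=-w^T\Pi_P(q)$ for the $\alpha$-component. The only divergence is the convexity of $f_q$: the paper argues directly from $d(x+y,K+L)\le d(x,K)+d(y,L)$ (using convexity of $P$ to merge the two dilates), whereas you invoke convexity of the perspective of $g_P$ composed with an affine map --- both are valid, your route conveniently reuses the identity already needed for the gradient, while the paper's inequality argument is more elementary and norm-independent.
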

\begin{proof}
For $p \in P$, the function $f_p$ is a composition of a linear function and a convex function and hence convex. 


Now, let  $q\in Q$. For the convexity of $f_q$, let $\alpha_1,\alpha_2 >0$, $c_1,c_2 \in \R^d$ and $\lambda \in [0,1]$. Using  $d(x+y, K+L) \leq d(x,K) + d(y, L)$, we obtain $ f_q\bigl(\lambda \alpha_1 + (1-\lambda)\alpha_2, \lambda c_1 + (1-\lambda)c_2\bigr)= d\bigl( \lambda q +(1-\lambda)q, \lambda(\alpha_1 P +c_1) + (1-\lambda)(\alpha_2 P + c_2)\bigr) \leq d\bigl(\lambda q, \lambda(\alpha_1P +c_1)\bigr) + d\bigl((1-\lambda)q, (1-\lambda)(\alpha_2 P + c_2)\bigr) = \lambda f_q(\alpha_1, c_1) + (1-\lambda)f_q(\alpha_2, c_2),$
which shows the convexity of $f_q$.

A direct application of the chain rule together with Lemma~\ref{lem:nablaGp} yields \eqref{eq:nablaFp}. In order to differentiate $f_q$, we express $f_q(\alpha, c)= d(q, \alpha P +c)= \alpha g_P(\frac{1}{\alpha}(q-c))$ with $g_P$ defined as in Lemma~\ref{lem:nablaGp}. Differentiating this expression yields

$$ \frac{\partial f_q}{\partial c} (\alpha, c)= -\alpha \nabla g_P\left(\frac{1}{\alpha} (q-c)\right)$$
and 
$$ \frac{\partial f_q}{\partial \alpha}(\alpha, c) = g_P\left(\frac{1}{\alpha}(q-c)\right) -\frac{1}{\alpha} \nabla g_P\left(\frac{1}{\alpha}(q-c)\right)^T (q-c).$$

Plugging in $\alpha=1$ and $c=0$ and using $g_P(q)^2 = (q- \Pi_P(q))^T (q- \Pi_P(q))$, we obtain \eqref{eq:nablaFq}.
\end{proof}

\paragraph{} As a consequence of the convexity of $f_p$ and $f_q$ for all $p \in P$ and $q \in Q$ and the definition of the Hausdorff distance in \eqref{eq:hausdorffDefi}, we see that the objective function in the computation of $\delta_H(P,Q)$ is convex so that a necessary and sufficient optimality condition can be expected. The next lemma now investigates the subdifferential of the objective function, which is a supremum of uncountably many convex functions.

\begin{lem}[Subgradient of the supremum of convex functions] \label{lem:subgradientInclusion}%
Let $I$ be a (possibly uncountable) index set and $f_i: \R^d \rightarrow \R$ convex for $i \in I$. Let $f: \R^d \rightarrow \R; f(x):= \sup \{f_i(x): i\in I\}$
and for $x \in \R^d$ define $\CA(x):= \{i \in I: f_i(x)= f(x)\}$. Then, for all $x \in \R^d$,
\begin{equation}
 \cl\Bigl( \conv \Bigl( \bigcup_{i \in \CA(x)} \partial f_i(x)   \Bigr) \Bigr)  \subseteq \partial f(x). 
 \label{eq:subgradientInclusion}
\end{equation}
\end{lem}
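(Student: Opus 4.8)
The plan is to prove the inclusion \eqref{eq:subgradientInclusion} by first establishing the elementary containment $\partial f_i(x) \subseteq \partial f(x)$ for every $i \in \CA(x)$, and then upgrading this to the closed convex hull by using that the subdifferential $\partial f(x)$ of a convex function is itself a closed convex set.

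First I would fix $x \in \R^d$ and take any $i \in \CA(x)$, i.e. $f_i(x) = f(x)$. Let $s \in \partial f_i(x)$. By definition of the subgradient, $f_i(y) \geq f_i(x) + s^T(y-x)$ for all $y \in \R^d$. Since $f(y) \geq f_i(y)$ for all $y$ (as $f$ is the pointwise supremum) and $f(x) = f_i(x)$, we get
$$ f(y) \geq f_i(y) \geq f_i(x) + s^T(y-x) = f(x) + s^T(y-x) \quad \text{for all } y \in \R^d, $$
which is precisely the statement $s \in \partial f(x)$. Hence $\bigcup_{i \in \CA(x)} \partial f_i(x) \subseteq \partial f(x)$.

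Next I would invoke the standard fact that for a convex function $g$ that is finite at $x$, the set $\partial g(x)$ is convex and closed (this is, e.g., part of Rockafellar, Theorem 23.4, and follows directly from the fact that $\partial g(x)$ is an intersection of the closed half-spaces $\{s : s^T(y-x) \leq g(y) - g(x)\}$ over all $y$). Applying this to $f$ — which is convex as a pointwise supremum of convex functions, and finite-valued by hypothesis — shows $\partial f(x)$ is closed and convex. Therefore, taking the closed convex hull of both sides of the inclusion $\bigcup_{i \in \CA(x)} \partial f_i(x) \subseteq \partial f(x)$ and using that the right-hand side already equals its own closed convex hull yields
$$ \cl\Bigl(\conv\Bigl(\bigcup_{i \in \CA(x)} \partial f_i(x)\Bigr)\Bigr) \subseteq \cl(\conv(\partial f(x))) = \partial f(x), $$
as claimed.

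There is essentially no main obstacle here: the argument is short and the only subtlety is being careful that $f$ is genuinely a real-valued convex function so that $\partial f(x)$ is nonempty (or at least a well-defined closed convex set) — this is guaranteed since each $f_i: \R^d \to \R$ is finite and convex and, by hypothesis, $f$ is real-valued on all of $\R^d$, hence continuous and locally Lipschitz, so $\partial f(x) \neq \emptyset$. I should also note that the reverse inclusion does \emph{not} hold in general (the left side can be a proper subset of $\partial f(x)$ when $I$ is infinite and some sort of compactness/upper-semicontinuity of $i \mapsto f_i$ fails), which is exactly why the lemma is phrased as a one-sided inclusion; no effort toward equality is needed.
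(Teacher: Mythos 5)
Your proof is correct and follows essentially the same route as the paper: show that every subgradient of an active $f_i$ is a subgradient of $f$ (you do it via the defining inequality, the paper phrases the same fact through supporting hyperplanes of the epigraphs and $\epi(f)=\bigcap_{i\in I}\epi(f_i)$), and then conclude with the closedness and convexity of $\partial f(x)$ from Rockafellar's Theorem 23.4. No gap; the two arguments differ only in presentation.
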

\begin{proof}
Let $x \in \R^d$, $i \in \CA(x)$ and $a \in \partial f_i(x)$. Let further $\bar a := \left( a \atop {-1} \right)$ and $\beta:= \bar a ^T \left(x \atop {f(x)}\right)$. Then, $H_=(\bar a,\beta)$ supports $\epi(f_i)$ and $\epi(f_i) \subseteq H_\leq(\bar a, \beta)$ (cf. \cite[Section 23]{rockafellar}).

Since $\epi(f)= \bigcap_{i\in I} \epi(f_i)$, we have $\epi(f) \subseteq H_\leq(\bar a, \beta)$; since $i \in \CA(x)$, we also have that $H_=(\bar a, \beta)$ supports $\epi(f)$ in $\left(x \atop {f(x)}\right)$. Thus, $a \in \partial f(x)$. Since $\partial f(x)$ is convex and closed by \cite[Theorem 23.4]{rockafellar}, the inclusion in \eqref{eq:subgradientInclusion} follows.
\end{proof}

\paragraph{} We are now ready to prove the main theorem of this section. The conditions of Theorem~\ref{theo:optCondHausdorffHomo} are also illustrated in Figure~\ref{fig:optCondHausdorff}.

\begin{figure}[htb]
\centering
\includegraphics[width=0.5\textwidth]{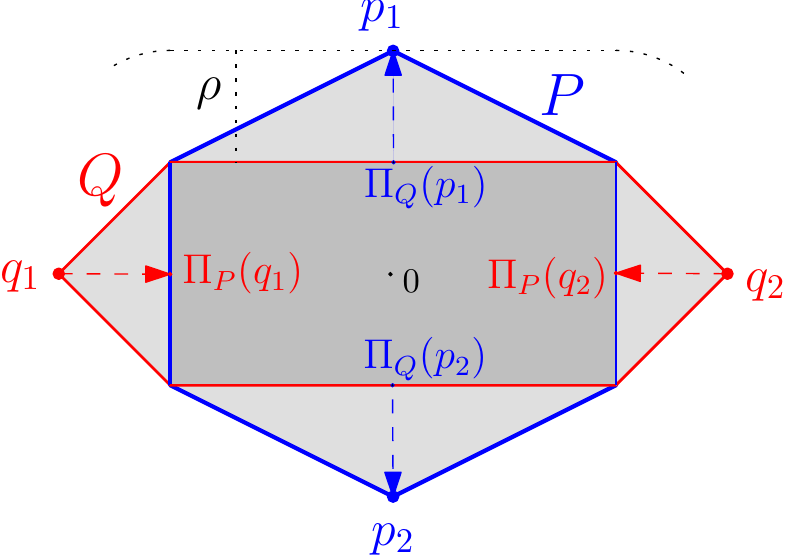}
\caption[Optimality condition for Hausdorff matching under homothetics]{The conditions of Theorem~\ref{theo:optCondHausdorffHomo}. The blue polytope $P\subseteq \R^2$ is in optimal position with respect to the red polytope $Q\subseteq \R^2$. Conditions (1) -- (3) of Theorem~\ref{theo:optCondHausdorffHomo} are verified with $R= \{p_1,p_2\}$, $S= \{q_1,q_2\}$ and $\rho$ as indicated. \label{fig:optCondHausdorff}}
\end{figure}

\begin{theo}[Optimality criterion for Hausdorff matching] \label{theo:optCondHausdorffHomo}%
Let $P, Q \in \CC_0^d$. Then, $P$ is in optimal homothetic position with respect to $Q$, if and only if there are $\rho\geq 0$ and $R\subseteq P, S\subseteq Q$ with $|R|+|S| \leq d+2$ such that the following three conditions hold:
\begin{enumerate}[(1)]
\item $ P\subseteq Q + \rho \B_2$ and $Q \subseteq P +\rho \B_2$ \label{item:primal}
\item $ d_2(p,Q) = \rho ~\forall p\in R$ and $d_2(q,P)=  \rho ~\forall q\in S$ \label{item:comp}
\item $\displaystyle 0 \in \conv\left(\left\{\left({(p - \Pi_Q(p))^T p \atop{p - \Pi_Q(p)}}\right): p\in R\right\}\cup\left \{\left({(\Pi_P(q)- q)^T\Pi_P(q)} \atop {\Pi_P(q)- q}\right): q \in S \right\}\right)$. \label{item:dual}
\end{enumerate}
\end{theo}
\begin{proof}
If $P=Q$, then conditions (1) -- (3) are trivially necessary and sufficient with the choice $\rho=0$ and any $R,S \subseteq P$ with $|R|+|S|\leq d+2$.
We will henceforth assume that $P \neq Q$ and consequently $\rho > 0$. 

We first show the sufficiency of the conditions:
Assume that conditions (1) -- (3) hold for some $\rho >0$, $R \subseteq P$, and $S\subseteq Q$ with $|R|+|S|\leq d+2$. Let $f: (0,\infty)\times \R^d \rightarrow [0, \infty);  (\alpha, c)\mapsto \delta_2(\alpha P + c ,Q),$
and, as in Lemma~\ref{lem:fpfq}, let $ f_p(\alpha, c)= d_2(\alpha p + c, Q)$ for  $p\in P$, and 
$ f_q(\alpha, c)=d_2(q, \alpha P +c)$ for $q \in Q$.
Then, (1) and (2) imply
$$ \rho = f(1,0) = \max \bigl(\{ f_p(1,0): p \in P\} \cup \{f_q(1,0): q \in Q\}\bigr) $$

with $(R\cup S)\subseteq \CA\bigl((1,0)\bigr)$ in the notation of Lemma~\ref{lem:subgradientInclusion}. Hence, by Lemmas~\ref{lem:fpfq} and \ref{lem:subgradientInclusion}, condition (3) yields $0 \in \partial f(1,0)$ which in turn implies that $\delta_2(P,Q) \leq \delta_2(\alpha P + c, Q)$ for all $\alpha >0$ and $c \in \R^d$.

It remains to show that the conditions are also necessary. For this purpose, let $P$ be in optimal homothetic position with respect to $Q$. Choose $\rho:= \delta(P,Q) > 0$ and define 
$$R':=\{p\in P: d(p, Q) = \rho\}\subseteq \bd(Q+\rho\B_2) ~\text{and}~ S':= \{q\in Q: d(q, P)= \rho\}\subseteq \bd(P+\rho \B_2).$$ 
We have $R' \neq \emptyset$ and $S' \neq \emptyset$, because one of these sets being empty would imply that $P$ is not optimally scaled. 
For $x \in \R^d$ and $C \in \CC_0^d$ define further $ a_C(x):= x- \Pi_C(x)$ 
and let
$$A:= \left\{\left({a_Q(p)^Tp} \atop{a_Q(p)}\right) : p \in R'\right\}, \quad B:=\left\{\left({a_P(q)^T \Pi_P(q)} \atop a_P(q)\right): q \in S'\right\}.$$ We show $0\in \conv (A\cup (-B))$. Carathéodory's Theorem (e.g.~\cite{dgk-63}) then yields that $R'$ and $S'$ can be reduced to subsets $R\subseteq R'$, $S\subseteq S'$  with $|R|+|S| \leq d+2$ .

For a contradiction, assume that $0 \not \in \conv (A\cup (-B))$. 
Then, $0$ can be strictly separated from $\conv(A\cup (-B))$, i.e.\ there exists $(\eta, y^T)^T \in\R^{d+1}\setminus\{0\}$ such that 
\begin{equation}
 \bigl(a_Q(p)^T p\bigr)\eta + a_Q(p)^T y \leq -1~\forall p \in R' \quad \text{ and }\quad   \bigl(a_P(q)^T\Pi_P(q)\bigr)\eta + a_P(q)^T y \geq 1~\forall q\in S'.
\label{eq:separation}
\end{equation} 
We will show that there exists $\lambda> 0$ such that $ \delta\bigl((1+\lambda \eta) P +\lambda y , Q\bigr) < \delta(P,Q)$, which contradicts the optimality of $(1,0)$.

For this purpose, let $$Q_\rho:= Q + \rho \B_2 = \bigcap_{x\in \bd(Q_\rho)} H_{\leq} (a_Q(x), a_Q(x)^T x),$$ 
where the second presentation is obtained by Lemma~\ref{lem:outerParallel}. 

Define $N:= \left\{(x, p) \in \bd(Q_\rho)\times P:   \bigl(a_Q(x)^Tp\bigr)\eta   + a_Q(x)^T y \geq 0\right\}$
and $f: N \rightarrow \R; (x,p) \mapsto a_Q(x)^T x - a_Q(x)^Tp$. For $(x,p) \in N$, \eqref{eq:separation} yields $\left(a_Q(x)^T p ,a_Q(x)^T\right)^T \notin A$, which implies $p \neq x$. Together with $p \in Q+ \rho \B_2$ , this yields $a_Q(x)^T p < a_Q(x)^T x$. Thus, for all $(x,p) \in N$, $f(x,p) >0$. Since $f$ is continuous and $N$ is compact, there exists $\varepsilon_1 >0$ such that 
\begin{equation}
f(x,p) \geq \varepsilon_1 ~\forall (x,p) \in N
\label{eq:fN>0}
\end{equation}

Now, for $(x,p) \in \bigl(\bd(Q_\rho)\times P\bigr) \setminus N$ and $\lambda >0$, we have
$$a_Q(x)^T\bigl((1+\lambda \eta) p + \lambda y\bigr) = \underbrace{a_Q(x)^T p}_{\leq a_Q(x)^T x} + \lambda \underbrace{\bigl((a_Q(x)^T p)\eta + a_Q(x)^T y\bigr)}_{<0} <a_Q(x)^T x .$$
By \eqref{eq:fN>0} and the boundedness of $Q_\rho$ and $P$, we can choose $\lambda >0 $ sufficiently small such that for all $(x,p) \in N$ 
$$a_Q(x)^T((1+\lambda \eta) p + \lambda y) = \underbrace{a_Q(x)^T p}_{\leq a_Q(x)^T x -\varepsilon_1} + \lambda \bigl((a_Q(x)^T p)\eta + a_Q(x)^T y\bigr) <a_Q(x)^T x .$$

Thus, 
\begin{equation}
(1+\lambda\eta)P + \lambda y \subseteq \int(Q_\rho) \quad \text{for all } \lambda >0 \text{ sufficiently small.}
\label{eq:PinQrho}
\end{equation} 

Assume further that $\lambda > 0$ is sufficiently small such that $\lambda \eta > -1$ and let
$$  P + \frac{\rho}{1+\lambda\eta} \B_2 = \bigcap_{x \in \bd (P+\rho \B_2)} H_\leq \left(a_P(x),a_P(x)^T \Pi_P(x) + \frac{\rho^2}{1+\lambda\eta}\right),$$
where again the $\CH$-presentation is obtained via Lemma~\ref{lem:outerParallel}.

Define $M:= \left\{x \in \bd(P+\rho\B_2):  \bigl(a_P(x)^T\Pi_P(x)\bigr)\eta   + a_P(x)^T y \leq 0\right\}$
and $g: M\times Q \rightarrow \R; \quad (x,q) \mapsto a_P(x)^T \Pi_P(x) +\rho^2 - a_P(x)^T q.$ For $x \in M $, the separation property \eqref{eq:separation} yields $\left(a_P(x)^T \Pi_P(x) ,a_P(x)^T\right)^T \notin B $, which implies that there is no $q\in Q$ such that $a_P(x)^T q = a_P(x)^T\Pi_P(x) +\rho^2$. Together with $Q \subseteq P +\rho \B_2$, we obtain $g(x,q) > 0$ for all $x \in M$ and $q \in Q$. Since $M\times Q$ is compact and $g$ continuous, there exists $\varepsilon_2 >0$ such that $g(x,q) \geq \varepsilon_2$ for all $x \in M$ and $q \in Q$.

Hence, we can again choose $\lambda >0 $ sufficiently small such that for all $x\in M$  and $q \in Q$, we have 
\begin{equation} 
a_P(x)^T q < a_P(x)^T \Pi_P(x) + \rho^2 + \lambda\bigl( (a_P(x)^T \Pi_P(x))\eta + a_P(x)^T y\bigr)
\label{eq:QinPlambdaRhoAlg}
\end{equation}
and \eqref{eq:QinPlambdaRhoAlg} is also fulfilled for $x \in \bd(P +\rho \B_2)\setminus M$, $\lambda >0$ and $q\in Q$.
Rearranging \eqref{eq:QinPlambdaRhoAlg} shows that it is equivalent to 
\begin{equation}
\frac{1}{1+\lambda\eta} (Q -\lambda y) \subseteq \int \left(P +\frac{\rho}{1+\lambda\eta} \B_2\right) \Longleftrightarrow Q \subseteq \int\left((1+\lambda\eta)P + \lambda y +\rho \B_2\right)
\label{eq:QinPlambdaRho}
\end{equation}
Together, \eqref{eq:PinQrho} and \eqref{eq:QinPlambdaRho} yield the desired contradiction.

\end{proof}

\subsection{Helly-Type Properties}
\label{sec:hausdorffHelly}

\paragraph{} Theorem~\ref{theo:optCondHausdorffHomo} at hand, we can derive Helly-properties of the Hausdorff matching problem. For the case of $\CV$-polytopes, this question has already received attention and it was shown as a byproduct in \cite{amentaHausdorff} that Hausdorff matching with $\CV$-polytopes can be formulated as a Convex Program, implying the standard Helly-type theorem on the constraints of the Convex Program. 
With Theorem~\ref{theo:optCondHausdorffHomo} at hand, we can generalize this result to arbitrary convex bodies.

\begin{cor}[Helly-type theorem for Problem~\ref{prb:HMHomo}] \label{cor:HellyTypeHausdorffMatching}%
Let $P,Q \in \CC_0^d$ and for $R\subseteq P$, $S\subseteq Q$, define 
\begin{equation}
\begin{array}{lllcll}
\rho(R,S):=& \min 	& \rho \\
&s.t.	& \alpha p +c & \in &Q + \rho \B_2 	&\forall p \in R \\
		 & & q& \in &\alpha P + c + \rho \B_2	&\forall q \in S\medskip\\
\end{array}
\label{eq:HausdorffMatchingVVHelly}
\end{equation}
Then for any $\rho^* \geq 0$,
$$\rho(P, Q)\leq \rho^* \quad \Longleftrightarrow \quad  \rho(R,S) \leq \rho^* ~~\forall R\subseteq P, S\subseteq Q, |R|+|S| \leq d+2.$$
\end{cor}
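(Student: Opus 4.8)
The plan is to derive this Helly-type statement directly from the optimality criterion in Theorem~\ref{theo:optCondHausdorffHomo}, exploiting that $\rho(R,S)$ is monotone under inclusion. First I would record two elementary facts. For $R\subseteq R'\subseteq P$ and $S\subseteq S'\subseteq Q$ we have $\rho(R,S)\leq \rho(R',S')$, since every $(\alpha,c)$ feasible for $(R',S')$ is feasible for $(R,S)$; in particular $\rho(R,S)\leq \rho(P,Q)$ for all $R,S$, which gives the forward (``$\Rightarrow$'') implication for free. It also shows the nontrivial content is the converse: if every small subproblem has value $\leq\rho^*$, then so does the full problem.

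For the converse I would argue by contraposition. Suppose $\rho(P,Q)>\rho^*$. Let $\rho^\dagger:=\rho(P,Q)$ and let $(\alpha^\dagger,c^\dagger)$ attain it (the minimum is attained: the feasible region for a fixed large $\rho$ is closed, $\alpha$ is bounded away from $0$ and $\infty$ by a compactness/support-function argument exactly as in the discussion preceding Problem~\ref{prb:HMHomo}, and $\rho\mapsto$ feasibility is a closed condition). Rescaling, $\alpha^\dagger P+c^\dagger$ is in optimal homothetic position with respect to $Q$ with $\delta_2(\alpha^\dagger P+c^\dagger,Q)=\rho^\dagger$; equivalently, after replacing $P$ by $\alpha^\dagger P+c^\dagger$ (which only reparametrizes the $(\alpha,c)$ in \eqref{eq:HausdorffMatchingVVHelly} and does not change any of the values $\rho(R,S)$ up to the same reparametrization), we may assume $P$ itself is in optimal homothetic position, so $\rho(P,Q)=\delta_2(P,Q)=\rho^\dagger$. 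Now apply Theorem~\ref{theo:optCondHausdorffHomo}: there exist $\rho^\dagger\geq 0$ and $R\subseteq P$, $S\subseteq Q$ with $|R|+|S|\leq d+2$ satisfying conditions (1)--(3). I claim $\rho(R,S)=\rho^\dagger>\rho^*$, which is the desired witness. That $\rho(R,S)\leq\rho^\dagger$ is monotonicity. For the reverse, I would show $(\alpha,c)=(1,0)$ is optimal for the subproblem $\rho(R,S)$: conditions (1) and (2) say $(1,0)$ is feasible with objective value $\rho^\dagger$, and conditions (2)--(3) are \emph{exactly} the first-order optimality conditions (via Lemmas~\ref{lem:fpfq} and \ref{lem:subgradientInclusion}, i.e. $0\in\partial f(1,0)$ where now $f(\alpha,c)=\max_{p\in R}f_p(\alpha,c)\vee\max_{q\in S}f_q(\alpha,c)$ is the objective of the subproblem) for the convex subproblem; hence $(1,0)$ is a global minimizer and $\rho(R,S)=\rho^\dagger$. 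This contradicts $\rho(R,S)\leq\rho^*$, proving the converse.

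The main obstacle, and the step deserving the most care, is the reduction ``we may assume $P$ is in optimal homothetic position.'' One must check that applying a homothety to $P$ transforms the family of subproblems $\{\rho(R,S)\}$ compatibly — concretely, that $\rho_{\alpha^\dagger P+c^\dagger,\,Q}(R,S)$ and $\rho_{P,Q}(\alpha^{\dagger-1}(R-c^\dagger),S)$ coincide under the obvious substitution $\alpha\mapsto\alpha/\alpha^\dagger$, $c\mapsto c-(\alpha/\alpha^\dagger)c^\dagger$ — so that ``$\rho(R,S)\leq\rho^*$ for all small $R,S$'' is preserved, and likewise the conclusion transfers back. A secondary point is the attainment of the minimum defining $\delta_H$ (and $\rho(P,Q)$); this is routine but should be stated, since Theorem~\ref{theo:optCondHausdorffHomo} presupposes an optimal position actually exists. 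Once these bookkeeping issues are settled, everything else is immediate from monotonicity and Theorem~\ref{theo:optCondHausdorffHomo}.
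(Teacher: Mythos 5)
Your proposal is correct and takes exactly the route the paper intends: the corollary is presented as a direct consequence of Theorem~\ref{theo:optCondHausdorffHomo}, and your argument---monotonicity of $\rho(R,S)$ for the trivial direction, then (after normalizing so that $P$ is in optimal homothetic position) extracting $R,S$ with $|R|+|S|\leq d+2$ from the necessity part of Theorem~\ref{theo:optCondHausdorffHomo} and re-running the sufficiency argument via Lemmas~\ref{lem:fpfq} and~\ref{lem:subgradientInclusion} on the finite subfamily to conclude that $(1,0)$ already minimizes the subproblem, so $\rho(R,S)=\rho(P,Q)$---is precisely that derivation. The two bookkeeping points you flag (attainment of the minimum defining $\delta_H$, and invariance of the subproblem values under the homothety reparametrization) are genuine but handled correctly.
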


\paragraph{} In addition to restricting the number of constraints, we can also state a Helly-type theorem which shows that for any two convex bodies there are always finite subsets of points the convex hulls of which induce the same optimal solution as the bodies themselves. In the same way as for the Minimum Enclosing Ball problem (cf.~\cite[Lemma 2.2]{coresetPaperDCG}), the size bound on these sets only depends on the ambient dimension and is independent of the bodies $P$ and $Q$.

\begin{theo}[0-core-sets for Problem~\ref{prb:HMHomo}]\label{theo:HMHomoHelly}%
Let $P,Q \in \CC_0^d$. There are subsets $R\subseteq \ext(P)$ and $S\subseteq \ext(Q)$ such that $|R|+|S|\leq d(d+2)$ and 
$$\delta_H(\conv(R),\conv(S))= \delta_H(P,Q).$$ 
\end{theo}
\begin{proof}
Assume without loss of generality that $\delta_H(P,Q)=\delta(P,Q)$.
Corollary~\ref{cor:HellyTypeHausdorffMatching} implies that there are subsets $R' \subseteq P$ and $S'\subseteq Q$ with $|R'|+|S'|\leq d+2$ such that in \eqref{eq:HausdorffMatchingVVHelly} only the containment constraints for $p \in R'$ and $q \in S'$ are necessary.

For $p\in R'$, let $a:= p-\Pi_Q(p)$ and $\beta:=a^T \Pi_Q(p)$. We have that $Q\subseteq H_\leq(a,\beta)$  and $\Pi_Q(p) \in H_=(a,\beta).$  Carathéodory's Theorem (e.g.~\cite{dgk-63}) applied to $Q \cap H_=(a,\beta)$ yields the existence of $q_1^p,\dots, q_d^p \in \ext \bigl(Q \cap H_=(a,\beta)\bigl) \subseteq \ext(Q)$ such that $\Pi_Q(p) \in \conv\{q_1^p,\dots, q_d^p\}$. Defining 
$$ S:= S' \cup \bigcup_{p\in R'} \{q_1^p,\dots, q_d^p\} $$ 
assures that $d(p, \conv(S))= d(p, Q)$ for all $p \in R'$.
Applying the same argument to $R'$ shows that $|R|+|S| \leq d(d+2)$.
\end{proof}

\paragraph{Remark.} The restriction to extreme points in Theorem~\ref{theo:HMHomoHelly} aims at the algorithmic application, where $P$ and $Q$ are specified as $\CV$-polytopes and $\ext(P)$ and $\ext(Q)$ are easily accessible. If one drops this restriction, the bound $|S|+|R|\leq d(d+2)$ can be improved to $|S|+|R|\leq 2(d+2)$.

\subsection{Exact Algorithms and Approximations}
\label{sec:hausdorffAlgos}
\paragraph{} The following theorem shows that, if the input polytopes are specified in $\CV$-presentation, also the matching problem can be solved efficiently.

\begin{theo} \label{theo:matchingSOCP}%
Let $P:= \conv\{p_1,\dots, p_n\}\subseteq \R^d$ and $Q:= \conv\{q_1,\dots, q_m\}\subseteq \R^d$.  If $\B= \B_2$, then $\delta_H(P,Q)$ can be approximated to any accuracy. If a $\CV$- or $\CH$-presentation of $\B$ is available, $\delta_H(P,Q)$ can be computed exactly in polynomial time. 
\end{theo}
\begin{proof}
The optimal solution of the Second Order Cone Program in \eqref{eq:matchingSOCP} can be approximated to any accuracy and equals $\delta_H(P,Q)$:

\begin{equation}
\begin{array}{llllll}
\delta_H(P,Q)= 	&\min	& \rho \\
				&s.t.	&  \Bigl\| \alpha p_i + c - \sum\limits_{j=1}^m \lambda_{ij} q_j \Bigr \|_2 	&	\leq & \rho	& \forall i \in [n] \\
				&		&  \Bigl\| q_j - \sum\limits_{i=1}^n \mu_{ij} p_i - c \Bigr\|_2 	&	\leq & \rho	& \forall j \in [m] \\ 
				&		&   \sum\limits_{j=1}^m \lambda_{ij}  =  1& & & \forall i\in [n] \\
				&		&   \sum\limits_{i=1}^n \mu_{ji}  =  \alpha & && \forall j\in [m] \\
				&		&   \lambda_{ij}, \mu_{ji} \geq 0	&&& \forall i \in [n], j\in [m].
\end{array}
\label{eq:matchingSOCP}
\end{equation}
If $\B$ is a polytope in $\CV$- or $\CH$-presentation, one may replace the Second Order Cone constraints in \eqref{eq:matchingSOCP} by linear constraints  as in \eqref{eq:distVH} and \eqref{eq:distVV} and obtain a Linear Program for the computation of $\delta_H(P,Q)$.
\end{proof}

\paragraph{} By Theorem~\ref{theo:HausdorffHHard}, simple evaluation of the Hausdorff distance is W[1]-hard, if at least one $\CH$-presented polytope is involved. Nonetheless, we can apply the so-called \enquote{Reference Point Method} from \cite{hausdorffReferencePoints} and obtain a polynomial time approximation algorithm also for this case. In the original setting, this framework requires the computation of the diameter of the involved sets, which is also W[1]-hard for $\CH$-polytopes \cite{normmaxW1}. We therefore replace the diameter of a polytope by the diameter of its bounding box. The performance guarantee of this version is slightly worse but in the same order of magnitude as the original result in \cite{hausdorffReferencePoints}.

\begin{defi}[Reference points]
For $P \in \CC_0^d$, let \index{reference point}\index{$r(P)$}\index{$s(P)$}\index{$d(P)$}
$$r(P):=  (h(P, -e_1), \dots,  h(P, -e_d))\in \R^d, \quad s(P):= (h(P, e_1),\dots, h(P, e_d))^T  \in \R^d $$
the \enquote{lower left} and \enquote{upper right} corner of the bounding box of $P$ and 
$$D(P):= \|s(P)- r(P)\|_2$$
the diameter of this box.
\end{defi}

\paragraph{} We first state some elementary properties of the interplay between the reference points of $P$ and $Q$ and the Hausdorff distance $\delta_2(P,Q)$.

\begin{lem}[Properties of the reference points] \label{lem:propertiesReferencePoints}%
\vspace{-0.6cm}
\begin{enumerate}[a)]
\item Let $P \in \CC_0^d$ and $\alpha_1,\alpha_2 >0$.\label{item:prop1} Then, $\delta_2\bigl(\alpha_1(P- r(P)), \alpha_2(P-r(P))\bigr) \leq |\alpha_1-\alpha_2|D(P)$. 
\item Let $P,Q \in \CC_0^d$. Then $|D(P)- D(Q)| \leq 2 \sqrt{d}\delta_2(P, Q)$. \label{item:prop2}
\item Let $P,Q \in \CC_0^d$. Then $\|r(P) - r(Q)\|_2 \leq \sqrt{d} \delta_2(P,Q)$.\label{item:prop3}
\end{enumerate}
\end{lem}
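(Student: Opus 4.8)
The plan is to prove the three inequalities separately, each time reducing to the support-function description of the Hausdorff distance from Lemma~\ref{lem:hausdorff3}, i.e.\ using that $|h(P,u)-h(Q,u)|\le \delta_2(P,Q)$ for every $u\in\B_2$ (since the Euclidean unit ball is self-polar). The key observation I would exploit throughout is that each reference-point coordinate is a value of a support function: $s(P)_i=h(P,e_i)$ and $r(P)_i=-h(P,-e_i)=h(P,-e_i)$ only up to sign, so differences of reference points are controlled coordinatewise by $\delta_2(P,Q)$, and passing from a coordinatewise $\ell_\infty$-type bound to the Euclidean norm costs a factor $\sqrt d$.

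For part~a), note that $r(\alpha(P-r(P)))=\alpha(r(P)-r(P))=0$ for any $\alpha>0$, so both bodies $\alpha_1(P-r(P))$ and $\alpha_2(P-r(P))$ sit in the nonnegative orthant with their lower-left corner at the origin; moreover $\alpha_2(P-r(P)) = \tfrac{\alpha_2}{\alpha_1}\,\alpha_1(P-r(P))$. First I would compute $\delta_2(\lambda K,K)$ for a body $K\subseteq\R^d$ contained in a box of diameter $D$ with one corner at the origin: for $0<\lambda\le 1$ one has $\lambda K\subseteq K$ and $K\subseteq \lambda K+(1-\lambda)(\text{box})$, so $\delta_2(\lambda K,K)\le(1-\lambda)D$; by symmetry (swapping roles / using $\lambda\ge1$) one gets $\delta_2(\lambda K,K)\le|1-\lambda|D$ in general. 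Applying this with $K=\alpha_1(P-r(P))$, whose bounding box has diameter $\alpha_1 D(P)$, and $\lambda=\alpha_2/\alpha_1$ gives $\delta_2\le|1-\alpha_2/\alpha_1|\cdot\alpha_1 D(P)=|\alpha_1-\alpha_2|D(P)$.

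For part~b), write $D(P)=\|s(P)-r(P)\|_2$ and similarly for $Q$, and estimate via the triangle inequality in $\R^d$: $|D(P)-D(Q)|\le\|(s(P)-r(P))-(s(Q)-r(Q))\|_2\le\|s(P)-s(Q)\|_2+\|r(P)-r(Q)\|_2$. Then each coordinate satisfies $|s(P)_i-s(Q)_i|=|h(P,e_i)-h(Q,e_i)|\le\delta_2(P,Q)$ and likewise $|r(P)_i-r(Q)_i|=|h(P,-e_i)-h(Q,-e_i)|\le\delta_2(P,Q)$, so each of the two Euclidean norms is at most $\sqrt d\,\delta_2(P,Q)$, yielding the bound $2\sqrt d\,\delta_2(P,Q)$. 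Part~c) is exactly the second half of this computation: $\|r(P)-r(Q)\|_2^2=\sum_{i=1}^d|h(P,-e_i)-h(Q,-e_i)|^2\le d\,\delta_2(P,Q)^2$.

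I do not expect a real obstacle here; the only mildly delicate point is part~a), where one must be careful that translating by $-r(P)$ genuinely places the body in the orthant with corner at the origin and that the outer box used to absorb the scaling has the right diameter $\alpha_1 D(P)$ — once that geometric picture is fixed the inequality $\delta_2(\lambda K,K)\le|1-\lambda|D$ follows from the containment formulation~\eqref{eq:hausdorffDefi2} of the Hausdorff distance. Parts b) and c) are routine once one records the identities $s(\cdot)_i=h(\cdot,e_i)$, $r(\cdot)_i=h(\cdot,-e_i)$ (up to the harmless sign) and invokes $|h(P,u)-h(Q,u)|\le\delta_2(P,Q)$ for unit vectors $u$.
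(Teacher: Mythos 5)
Parts b) and c) are correct and coincide with the paper's own argument: both use Lemma~\ref{lem:hausdorff3} with $u=\pm e_i\in\B_2=\B_2^\circ$ to bound the coordinates of $s(P)-s(Q)$ and $r(P)-r(Q)$ by $\delta_2(P,Q)$, and then pay a factor $\sqrt d$ for passing to the Euclidean norm. In part a), however, your auxiliary claim that $\lambda K\subseteq K$ for $0<\lambda\le 1$ is false in general: it would require $0\in K$, i.e.\ $r(P)\in P$, and the lower-left corner of the bounding box typically does not belong to the body (take $P$ a Euclidean ball; then $r(P)$ is a corner of the circumscribed box, well outside $P$). So, as written, one of the two containments needed in the formulation \eqref{eq:hausdorffDefi2} is unjustified. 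The gap is local and easy to repair: all you need is $\lambda K\subseteq K+(1-\lambda)D\B_2$, which follows exactly like your other inclusion, since for $x\in K$ one has $\lambda x=x-(1-\lambda)x$ and $\|(1-\lambda)x\|_2\le(1-\lambda)D$ because $K$ lies in the box of diameter $D$ with corner at the origin. With that correction, your rescaling reduction ($K=\alpha_1(P-r(P))$, $\lambda=\alpha_2/\alpha_1$, box diameter $\alpha_1 D(P)$) gives the claimed bound. For comparison, the paper argues a) pointwise and more directly: assuming $r(P)=0$, for every $p\in P$ it bounds $d_2(\alpha_1p,\alpha_2P)\le\|\alpha_1p-\alpha_2p\|_2=|\alpha_1-\alpha_2|\,\|p\|_2\le|\alpha_1-\alpha_2|D(P)$, and symmetrically for points of $\alpha_2P$, which avoids any containment between the bodies themselves. (Incidentally, both you and the paper read $r(P)$ as the coordinatewise minimum of $P$, i.e.\ $r(P)_i=-h(P,-e_i)$; the sign in the printed definition is a typo and, as you note, immaterial for part c).)
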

\begin{proof}
\begin{enumerate}[a)]
\vspace{-0.8cm}
\item Let without loss of generality $r(P)=0$, and $p\in P$. Then, $d(\alpha_1 p,\alpha_2 P) \leq \|\alpha_1 p -\alpha_2 p\|_2 \leq |\alpha_1-\alpha_2|D(P)$ and the same argument shows that also $\max \{d(x, \alpha_1 P): x \in \alpha_2 P\} \leq |\alpha_1-\alpha_2|D(P)$.
\item By Lemma~\ref{lem:hausdorff3}, for $i \in [d]$, we have $|h(P, \pm e_i) - h(Q, \pm e_i)| \leq \delta_2(P,Q)$ and hence $|D(P)-D(Q)|\leq 2\sqrt{d}\delta_2(P,Q)$.
\item The statement also follows from $|h(P, - e_i) - h(Q, - e_i)| \leq \delta_2(P,Q)$ for all $i\in [d]$. 
\end{enumerate}
\end{proof}

\paragraph{} If we compute a homothetic transformation which only relies on our two reference points, we can give the following performance guarantee.

\begin{lem}[Approximation by reference points] \label{lem:referencePoints}%
Let $P,Q \in \CC_0^d$ and 
$$\bar \alpha:= \frac{D(Q)}{D(P)}  \quad \text{ and } \quad \bar c:=r(Q) - \bar \alpha r(P)$$
Then, 
$$ \delta_2\bigl(\bar \alpha P + \bar c,~ Q\bigr) \leq\left( 3 \sqrt{d}+1\right) \delta_H(P,Q). $$
\end{lem}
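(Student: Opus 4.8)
The plan is to compare the transformed copy $\bar\alpha P + \bar c$ with $Q$ by routing through an intermediate copy of $P$ that is already in optimal homothetic position, and to use the triangle inequality for $\delta_2$. Let $\alpha^* > 0$, $c^* \in \R^d$ realize $\delta_H(P,Q)$, i.e.\ $\delta_2(\alpha^* P + c^*, Q) = \delta_H(P,Q) =: \rho^*$. First I would estimate $\alpha^*$: applying Lemma~\ref{lem:propertiesReferencePoints}\ref{item:prop2} to $\alpha^* P + c^*$ and $Q$ gives $|D(\alpha^* P + c^*) - D(Q)| = |\alpha^* D(P) - D(Q)| \leq 2\sqrt d\,\rho^*$, so $|\alpha^* - \bar\alpha| D(P) \leq 2\sqrt d\,\rho^*$, because $\bar\alpha D(P) = D(Q)$ by definition. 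Since $\delta_2$ is translation invariant, I may as well work with the centered bodies; set $P_0 := P - r(P)$ so that $r(P_0) = 0$.

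Next, I would write the chain
$$\delta_2(\bar\alpha P + \bar c, Q) \leq \delta_2(\bar\alpha P + \bar c,\ \alpha^* P + c^*) + \delta_2(\alpha^* P + c^*, Q),$$
the second term being exactly $\rho^*$. For the first term, note $\bar\alpha P + \bar c = \bar\alpha(P - r(P)) + r(Q) = \bar\alpha P_0 + r(Q)$, and similarly I would like to recenter $\alpha^* P + c^*$. Here I would use Lemma~\ref{lem:propertiesReferencePoints}\ref{item:prop3}: since $r(\alpha^* P + c^*) = \alpha^* r(P) + c^*$, part \ref{item:prop3} applied to the optimal copy and $Q$ gives $\|\alpha^* r(P) + c^* - r(Q)\|_2 \leq \sqrt d\,\rho^*$. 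Using translation invariance and the fact that a translation by a vector $t$ changes $\delta_2$ by at most $\|t\|_2$ (which follows directly from $\delta_2(A+t, A) \le \|t\|_2$ and the triangle inequality, or from \eqref{eq:hausdorffDefi2}), I can replace $\alpha^* P + c^*$ by $\alpha^* P_0 + r(Q)$ at the cost of an additive $\sqrt d\,\rho^*$. Then
$$\delta_2(\bar\alpha P_0 + r(Q),\ \alpha^* P_0 + r(Q)) = \delta_2(\bar\alpha P_0,\ \alpha^* P_0) \leq |\bar\alpha - \alpha^*| D(P) \leq 2\sqrt d\,\rho^*$$
by Lemma~\ref{lem:propertiesReferencePoints}\ref{item:prop1} (with $P_0$ in the role of $P-r(P)$) and the bound on $|\alpha^* - \bar\alpha|D(P)$ from the first step. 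Collecting the three contributions $2\sqrt d\,\rho^* + \sqrt d\,\rho^* + \rho^* = (3\sqrt d + 1)\rho^*$ gives the claim.

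The routine obstacle is bookkeeping the translations carefully: one must make sure that after recentering both $\bar\alpha P + \bar c$ and $\alpha^* P + c^*$ around $r(Q)$, the leftover translation error is controlled by exactly Lemma~\ref{lem:propertiesReferencePoints}\ref{item:prop3} and nothing more, and that the scaling-comparison Lemma~\ref{lem:propertiesReferencePoints}\ref{item:prop1} is invoked with the correctly centered body (its hypothesis is stated for $P - r(P)$, so the centering must genuinely place the reference point at the origin). I expect no conceptual difficulty; the only thing to watch is that the constant comes out as $3\sqrt d + 1$ and not worse, which forces the decomposition to use \ref{item:prop2} for the scaling discrepancy (giving $2\sqrt d$), \ref{item:prop1} to convert that discrepancy into a Hausdorff bound (another $2\sqrt d$ would be too much — so in fact the $2\sqrt d$ from \ref{item:prop2} must be reused, i.e.\ the $|\bar\alpha-\alpha^*|D(P) \le 2\sqrt d\rho^*$ bound feeds directly into \ref{item:prop1} without an extra factor), \ref{item:prop3} for the translation alignment ($\sqrt d$), and the triangle inequality for the final $+1$.
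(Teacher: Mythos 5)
Your proof is correct and follows essentially the same route as the paper: both recenter via the reference points, bound the scale discrepancy by $|\bar\alpha-\alpha^*|D(P)\leq 2\sqrt{d}\,\rho^*$ using parts (a) and (b) of Lemma~\ref{lem:propertiesReferencePoints}, pay an additional $\sqrt{d}\,\rho^*$ for the translation alignment via part (c), and add $\rho^*$ from the optimal matching by the triangle inequality. The only difference is the order in which you apply the triangle inequality, which is immaterial.
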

\begin{proof}
Let $\alpha^* > 0 $, $c^* \in \R^d$ and $\rho^* \geq 0$ such that $\rho^*= \delta_H(P,Q)= \delta_2(\alpha^* P + c^*, Q)$. Then, by Lemma~\ref{lem:propertiesReferencePoints}\ref{item:prop3}), 
\begin{equation}
\delta_2\Bigl(\alpha^* P + c^* + r(Q) - r(\alpha^* P +c^*),~ Q\Bigr)\leq \left(1 + \sqrt{d}\right)\rho^*.
\label{eq:SSchlangeSopt}
\end{equation}
Moreover, by using Lemma~\ref{lem:propertiesReferencePoints}\ref{item:prop1}) and \ref{item:prop2}), 
\begin{equation}
{\renewcommand{\arraystretch}{1.5}
\begin{array}{rl} 
& \delta_2\Bigl(\bar \alpha(P - r(P)) + r(Q), \alpha^* (P - r(P)) + r(Q)\Bigr) \\
\leq  & |\alpha^* - \bar\alpha|D(P) = |\alpha^* D(P+c^*) - D 	(Q)| \leq 2\sqrt{d} \rho^*.
\label{eq:SSschlange}
\end{array}}
\end{equation}

By combining \eqref{eq:SSchlangeSopt} and \eqref{eq:SSschlange}, we obtain
\begin{equation}
\begin{array}{rcl}
 \delta_2( \bar \alpha P + \bar c, Q) &\leq & ~~~ \delta_2\Bigl(\bar \alpha(P - r(P)) + r(Q), ~\alpha^* (P - r(P)) + r(Q)\Bigr)  \\
& & + ~\delta_2\Bigl(\alpha^* P + c^* + r(Q) - r(\alpha^* P +c^*), ~Q\Bigr) \\
&  \leq& {(3\sqrt{d}+1)\rho^*}.  
\end{array}
\nonumber
\end{equation}
\end{proof}

\paragraph{} Since an axis-parallel bounding box for a polytope in $\CV$- or $\CH$-presentation can be computed in polynomial time \cite{gk-93}, we obtain the following.

\begin{theo}\label{theo:referencePoints}%
For polytopes $P,Q \subseteq \R^d$ in $\CV$- or $\CH$-presentation, a factor-$\left( 3 \sqrt{d}+1\right)$-approximation of $\delta_H(P,Q)$ can be computed in polynomial time.
\end{theo}

\paragraph{Acknowledgements.} The author would like to thank René Brandenberg, Peter Gritzmann, Andreas Schulz, and Günter Rote for inspiring discussions and many helpful comments.

\bibliographystyle{plain}
\bibliography{references}
\end{document}